\documentclass[12pt]{amsart}

\usepackage{graphicx}
\usepackage{geometry}
\usepackage{amssymb}
\usepackage{setspace}
\usepackage[dvipsnames]{pstricks}
\usepackage{pst-node}
\usepackage{pst-slpe}
\usepackage{amsthm}
\usepackage{tipa}
\usepackage{protosem}
\usepackage{hieroglf}
\usepackage{subcaption}
\usepackage{graphicx}
\usepackage{amsmath}

\usepackage{array}
\usepackage{tikz}
\usepackage{pgfplots}
\pgfplotsset{compat=1.18}
\usetikzlibrary{arrows.meta,calc}
\usepackage{xcolor}
\usepackage{hyperref}
\definecolor{navy}{RGB}{31,78,121}

\pgfplotsset{
  expspace/.style={
    width=6.4cm, height=6.4cm,
    axis equal image,
    xmin=0, xmax=1, ymin=0, ymax=1,
    xtick={0,0.2,...,1}, ytick={0,0.2,...,1},
    minor tick num=1,
    xlabel={$\theta_{11}$},
    ylabel={$\theta_{22}$},
    xlabel style={font=\small},
    ylabel style={font=\small},
    tick label style={font=\footnotesize},
    grid=both,
    grid style={line width=0.2pt, draw=gray!30},
    major grid style={line width=0.3pt, draw=gray!40},
    axis line style={line width=0.6pt},
    clip=false,
  }
}

\DeclareMathOperator\conv{conv}

\usepackage{pst-tree}
\usepackage{amsmath}
\DeclareMathOperator*{\argmax}{argmax}

\usepackage{hyperref}
\hypersetup{
    colorlinks=true,
    linkcolor=blue,
    filecolor=magenta,      
    citecolor=blue,
}

\usepackage{cleveref}

\usepackage{amsfonts}
\usepackage{amsmath}
\usepackage{amssymb}
\usepackage{graphicx}
\usepackage{appendix}
\usepackage{multirow}
\usepackage{ulem}
\usepackage{setspace}
\usepackage[dvipsnames]{pstricks}
\usepackage{pst-node}
\usepackage{pst-slpe}
\usepackage{amsthm}
\usepackage{lmodern}
\usepackage[T1]{fontenc}
\usepackage{mathtools}
\usepackage{amsfonts}
\usepackage{multicol}
\usepackage{pdflscape}
\usepackage[latin9]{inputenc}
\usepackage{float}
\usepackage{amsmath}
\usepackage{amsthm}
\usepackage{xcolor}
\usepackage{amssymb}
\usepackage{geometry}
\usepackage{setspace}
\usepackage{graphicx}
\usepackage[authoryear]{natbib}
\usepackage{fancyhdr}
\usepackage{mathrsfs}

\usepackage{booktabs}
\usepackage{amssymb}
\usepackage{pifont}
\makeatletter
\newcommand{\norm}[1]{\left\lVert#1\right\rVert}

\usepackage{natbib}

  \theoremstyle{definition}
  \newtheorem{defn}{\protect\definitionname}

  \theoremstyle{plain}

  \theoremstyle{plain}
  
  \theoremstyle{plain}
  
  \theoremstyle{definition}
  
  \theoremstyle{plain}
  
  \theoremstyle{plain}
  \newtheorem{lemma}{Lemma}
  \theoremstyle{plain}
  
  \theoremstyle{plain}
  \newtheorem{proposition}{\protect\propositionname}
\theoremstyle{remark}

  \theoremstyle{definition}
  \newtheorem{example}{Example}

  \theoremstyle{definition}
  \newtheorem{assump}{Axiom}
  \newtheoremstyle{emptyplain}
    {}          
    {}          
    {\itshape}  
    {}          
    {\bfseries} 
    {.}         
    { }         
    {#3}        
\theoremstyle{emptyplain}

  \providecommand{\axiomname}{Axiom}
  \providecommand{\conjecturename}{Conjecture}
  \providecommand{\definitionname}{Definition}
  \providecommand{\corollaryname}{Corollary}
  \providecommand{\theoremname}{Theorem}
\providecommand{\propositionname}{Proposition}
\usepackage{xcolor,cancel}

\DeclareMathOperator\supp{supp}
\usepackage{empheq}
\usepackage[most]{tcolorbox}

\newtcbox{\mymath}[1][]{%
    nobeforeafter, math upper, tcbox raise base,
    enhanced, colframe=blue!30!black,
    colback=blue!30, boxrule=1pt,
    #1}

\begin{document}

\title[]{Agreement and Diversity in Interpretation}
\author[]{Francesco Bilotta$^\ast$}
\author[]{Luca Braghieri $^\S$}
\author[]{Collin Raymond$^\sharp$}
\author[]{Mark Whitmeyer$^\ddagger$}
\thanks{}
\thanks{$^\ast$ Bocconi University, E-mail: francesco.bilotta2@phd.unibocconi.it.}
\thanks{$^\S$ Bocconi University and NHH, CEPR, CESifo, and IGIER. E-mail: luca.braghieri@unibocconi.it}
\thanks{$^{\sharp}$ Cornell University, Email: collinbraymond@gmail.com.}
\thanks{$^{\ddagger}$ Arizona State University, Email: mark.whitmeyer@gmail.com.}
\date{July 2026.  We thank audiences at the University of Pennsylvania, RUD, BSGE Summer Forum, as well as Aislinn Bohren, Stephen Morris, Philipp Strack and Evan Piermont for helpful comments.}


\begin{abstract}
We study joint decision-making when agents agree on all primitives other than signal likelihoods. We propose a decision-theoretic measure of interpretive disagreement: a pair of subjective models is more agreeable than another if, uniformly across decision problems, it supports a larger set of signal-contingent plans that both agents weakly prefer ex-ante to the common reservation payoff. We show that this measure is prior independent and can be represented as an inclusion preorder over pairs of subjective models: each model in the more agreeable pair is a convex combination of the two models in the less agreeable pair. We then show that the measure's unique rotation-invariant scalar completion is cosine similarity. Applications show that greater agreement reduces speculative-trade wedges, expands a normalized version of the ex-ante Pareto frontier, and enlarges the set of single-model rationalizations. Our order is independent of Blackwell dominance and selects quadratic over KL-type Bregman divergences.
\end{abstract}

\maketitle
\section{Introduction}\label{sec:intro}

In many economic environments, people disagree about how to interpret a common piece of evidence. Doctors arrive at different conclusions from the same diagnostic test \citep{chan2022selection}, financial analysts who observe the same public information form different beliefs about a firm's future prospects \citep{kandel1995differential,bamber1999differential,bastianello2025mental}, and central bankers draw different inferences from the same macroeconomic indicators \citep{howes2026monetary,kaminski2026monetary}. The spread of AI systems that generate information through opaque, black-box processes is likely to make such disagreements even more prevalent, because people often interpret the same AI-generated output differently \citep{green2019principles,yu2024heterogeneity,eiermann2026child}.

Interpretive disagreements are consequential because they can impair joint decision-making even when agents' incentives are perfectly aligned. For instance, two doctors who both want their patient to recover may nevertheless disagree about treatment if they interpret the same test results differently. 

A measure of interpretive disagreement should satisfy two desiderata. First, it should be a property of the agents' subjective models, rather than of a particular decision problem, because the same evidence can inform many different decisions. The same diagnostic test, for example, can inform treatment, triage, and trial enrollment. Second, it should have economic content: when two interpretations are "closer", agents should find it easier to agree on a joint course of action.

In this paper, we propose a decision-theoretic measure of interpretive (dis)agreement that satisfies these two desiderata. To isolate the role of interpretation, we study agents who share the same prior, state space, action set, payoff function, reservation payoff, and Bayesian updating rule, but disagree about signal likelihoods.\footnote{We use the terms signal likelihood, signal structure, subjective model, and Blackwell experiment interchangeably.} We say that one pair of subjective models exhibits more agreement than another if it supports a larger set of signal-contingent plans that both agents are willing to accept ex-ante, relative to the problem's reservation payoff. The comparison yields a preorder over pairs of subjective models and, in the spirit of Blackwell's comparison of experiments \citep{blackwell1951comparison,blackwell1953equivalent}, is uniform across priors and payoff environments.

Our construction has a simple geometric representation. Each signal-contingent plan induces a surplus vector over state-signal pairs, measuring payoffs relative to the problem's fixed reservation payoff. Each agent evaluates this vector using her subjective joint distribution over states and signals, so each \textit{ex-ante} participation constraint is a half-space in the surplus space. The surplus vectors acceptable to both agents form the intersection of two half-spaces, a convex cone. We, therefore, say that one pair of models is more agreeable than another when its jointly acceptable cone contains the other's.

Our main result reveals that this cone-inclusion comparison has an especially simple representation. One pair of models supports a larger set of jointly acceptable signal-contingent plans if and only if each model in the more agreeable pair is a convex combination of the two models in the less agreeable pair. Thus, a comparison defined through \textit{ex-ante} participation constraints reduces to a prior-independent convexity criterion on signal structures themselves.

We then develop three applications that illustrate the economic content of the inclusion preorder. The first is speculative trade. A classic motivation for introducing belief disagreement to economic models is to explain how agents can rationally trade against one another despite no-trade results. We revisit this idea in a setting where agents observe the same evidence but interpret it differently. When agents disagree about signal likelihoods, they assign different values to the same state-signal-contingent contract. The gap between these valuations determines the interval of transfers under which the agents are willing to take opposite sides of the contract. We show that greater agreement shrinks this speculative-trade interval for every contract. In other words, the same order that expands the set of cooperative surplus plans reduces the scope for mutually acceptable speculative bets.

Our second application is to the \text{ex-ante} Pareto frontier of jointly-feasible payoffs. Greater agreement expands the set of mutually acceptable surplus vectors, but beliefs also determine how any given surplus plan is evaluated. We show that, after an appropriate normalization, the Pareto frontier generated by the more agreeable pair dominates the payoff set generated by the less agreeable pair. 

The third application is to rationalizing models: single Bayesian models under which an outside observer could interpret the agents' joint behavior as optimal. This captures the idea that joint decisions often need to be defensible as the product of a coherent common narrative, even when the agents privately disagree about how to interpret the evidence. We show that greater agreement enlarges the set of such narratives.

After highlighting the applications of the inclusion preorder, we turn one of its limitations, namely that it is incomplete. We show that the inclusion preorder admits a unique rotation-invariant scalar completion, given by the canonical statistical notion of cosine similarity. The rotation itself carries economic content, reflecting the fact that the Arrow-Debreu surplus space can be decomposed into directions along which the agents' evaluations are aligned, directions along which they disagree, and directions that are irrelevant for their participation constraints. Consequently, cosine similarity emerges as the canonical scalar measure of how \textit{close} two interpretations are to supporting the same jointly-acceptable plans.

We end our analysis with three comparisons that clarify the scope of our approach. First, we compare the inclusion preorder with Blackwell dominance and show that the two orders are not nested. Second, we compare our scalar completion with Bregman divergences, a family of statistical distances that includes the Kullback-Leibler divergence as a leading example \citep{kullback1951information,bregman1967relaxation}. Within the Bregman class, the completion of our inclusion preorder is equivalent to the quadratic divergence between normalized belief vectors, thereby distinguishing our measure from KL-type measures of belief disagreement. Finally, we compare our \text{ex-ante} approach with an interim version in which acceptability is evaluated after the signal is realized.

\subsection{Related Work.}
The dogmatic disagreement we consider in our paper relates to the literature on the common-prior assumption. \citet{morris1995common} provides an insightful survey of the role of common priors and emphasizes a familiar concern: once common priors are abandoned,  models of disagreement can become too flexible. We depart from the common-prior benchmark in a structured way. Agents share a common prior over payoff-relevant states, but disagree about signal likelihoods. This differentiates our work starkly from the papers establishing no-trade results, which require not only a common prior over payoff-relevant states, but that agents share a common model of how signals are generated, which is itself common knowledge.\footnote{See for instance the seminal no-trade results of \citet{milgrom1982information,sebenius1983don,morris1994trade}.}

A related recent literature studies learning and behavior under misspecified models, focusing on settings in which an agent's subjective model differs from the true model of the world.\footnote{Recent contributions study asymptotic behavior with endogenous actions and misspecification \citep{esponda_pouzo_2016_berknash,esponda_pouzo_2025_bnr,
esponda_pouzo_yamamoto_2021_abblm,fudenberg_lanzani_strack_2021_ecta}, incorrect beliefs about signal precision \citep{heidhues2021_convergence_te}, long-run beliefs under  misspecified signal-generating processes \citep{bohren_hauser_2021_ecta}, misspecified social learning \citep{frick2020_fragility_ecta,bohren2016_jet}, convergence under 
misspecified learning \citep{frick2023_belief_restud}, behavioral foundations of misspecification \citep{bohren2023behavioral}, and welfare comparisons for biased learning \citep{frick2024}. \citet{bohren_hauser_2025_are} survey the literature.} Our approach is different. We do not ask how behavior, beliefs, or welfare change when an agent's model differs from the truth. Instead, we take both agents' subjective models as primitives and ask how to measure the relative disagreement between them.

One way to interpret our participation constraints is that they ask when a committed signal-contingent decision rule has positive value relative to a fixed reservation payoff. This formulation is close to the decision-rule perspective in Wald's statistical decision theory \citep{wald1950statistical}, where rules are evaluated by their payoff or risk vectors and compared with other feasible rules. Beginning  with \citet{blackwell1951comparison,blackwell1953equivalent}, a large literature studies the value of information for a single decision maker.\footnote{Related work asks whether  information can be harmful outside expected utility \citep{wakker,hilton1990,safra1995,hill2020},and how information should be valued under non-Bayesian updating or ambiguity \citep{li2016,celen,bordoli2024,whitmeyer2023blackwell,von2024perils,deimen2025blackwell,
braghieri2026meaningful}. \citet{escude2025misperception} study when information reduces disagreement about priors, in a setting where agents agree on the experiment.} Our approach differs from these papers in that we do not ask whether information is valuable for a single decision maker. Instead, we ask which committed signal-contingent plans are deemed valuable by both agents under their respective subjective models, and use this joint acceptability criterion as a primitive for measuring disagreement.

Finally, our applications relate to trade and welfare under belief disagreement. The  speculative-trade application connects to work on heterogeneous beliefs in financial markets,  following \citet{harrison1978speculative}, and in bilateral trade, as in \citet{morris1994trade}.\footnote{Related work studies trade when belief disagreement is combined with ambiguity aversion, which can change agents' willingness to trade \citep{billot2000sharing,rigotti2005uncertainty,rigotti2008subjective}.} We isolate  disagreement about interpretation and show how its magnitude, as captured by our agreement order, governs the scope for speculative trade. A related welfare literature asks when, in environments with belief disagreement, one can make credible welfare comparisons \citep{brunnermeier2014welfare,gilboa2014no}. We sidestep these normative questions by taking each agent's subjective welfare assessment as given. Nevertheless, there is a useful connection: \citet{brunnermeier2014welfare} treat convex combinations of subjective beliefs as a set of ``reasonable'' beliefs, while convexification (endogenously) plays a central role in our analysis via our agreement order.

\vspace{0.5cm}

\noindent \textbf{Roadmap.} The remainder of the paper proceeds as follows. Section \ref{sec:model} presents the model and develops the geometric representation of beliefs as state-signal vectors. Section \ref{sec:inclusion} defines the cone-inclusion preorder and proves its convex-hull characterization. Section \ref{sec:app} develops our three applications: speculative trade, \text{ex-ante} Pareto frontiers, and rationalizing models. Section \ref{sec:cosine} establishes cosine similarity as the unique rotation-invariant completion of the inclusion order. Section \ref{sec:disc} connects the order to Blackwell's order and Bregman divergences and discusses interim participation. Section \ref{sec:con} concludes. Appendix A contains omitted proofs, and the Appendix B (online) contains expanded discussion.

\section{Model}\label{sec:model}
\subsection{Framework}\label{sec:framework}
We consider two agents $i\in I=\{1,2\}$ who face a common decision problem under uncertainty. The state of the world is $\omega\in\Omega$, where $\Omega$ is finite with $|\Omega|=n$. Neither agent observes $\omega$ directly, and the two agents share a common prior $p\in\Delta(\Omega)$.

A decision problem is a pair \((A,u)\), where \(A\) is a finite action set and \(u\colon A\times\Omega\to\mathbb R\) is a common utility function. To evaluate protocols, we also fix a scalar reservation payoff \(\bar u\), common to the agents. If no signal-contingent protocol is adopted, the agents' ex-ante benchmark payoff is \(\bar u\).\footnote{When useful, and in particular in Section~\ref{sec:rationalizing}, we assume that the reservation payoff can be represented by a feasible reservation action \(a^0\in A\) satisfying
\[
\sum_{\omega\in\Omega}p(\omega)u(a^0,\omega)=\bar u.
\]
The action \(a^0\) need not deliver \(\bar u\) in every state and need not be prior-optimal. If, instead, one requires \(\bar u\) to be generated by an action that is prior-optimal under \(p\), the full-space cone-inclusion criterion characterized in \Cref{prop:inclusion_char} remains sufficient for the same uniform behavioral implication, but is no longer necessary. Under that alternative, finite decision problems generate only a subcone \(K_p\subsetneq\mathbb R^{\Omega\times S}\), so the exact behavioral criterion would compare \(C(p_{\hat m},p_{\hat m'})\cap K_p\) and \(C(p_m,p_{m'})\cap K_p\). We develop this prior-optimal-action-benchmark variant in the online appendix (\Cref{app:payoff_vectors_decision_problems}).} To keep notation light, we suppress dependence on \(\bar u\) unless it is useful to make it explicit. Agents may randomize over actions, so for any belief \(q\in\Delta(\Omega)\) we define the set of optimal (possibly mixed) actions as
\[
A^*(q)\coloneqq\arg\max_{\alpha\in\Delta(A)}\mathbb E_{\omega\sim q}\left[U(\alpha,\omega)\right],
\]
where \(U(\alpha,\omega)\coloneqq\sum_{a\in A}\alpha(a)u(a,\omega)\).

Before choosing an action, the agents observe a common signal realization about the state of the world and update their beliefs according to Bayes' rule. The agents agree on the state space \(\Omega\), the prior \(p\), the finite space of possible signal realizations \(S\) (with \(\left|S\right|=k\)), and on which
signal \(s\in S\) is realized. The only potential disagreement is on the conditional distribution of signals given the state of the world.\footnote{This misspecification is dogmatic: the agents do not share a commonly understood model of uncertainty over signal structures that could reconcile their likelihoods and restore agreement.}

Formally, agent $i$ is characterized by a subjective joint distribution $p_i\in\Delta(\Omega\times S)$ whose $\Omega$-marginal coincides with the common prior: $p_i(\omega)=p(\omega)$ for all $\omega\in\Omega$. Given this joint distribution, each agent's subjective likelihood is defined implicitly by $p_i(\omega,s)=p(\omega) p_i(s\mid\omega)$. We further write $p_i(s)=\sum_{\omega\in\Omega}p_i(\omega,s)$ for the induced (subjective) distribution over signals, and $p_i(\omega\mid s)=p_i(\omega,s)/p_i(s)$ for agent $i$'s posterior whenever $p_i(s)>0$. Thus, the two agents may differ both in their induced unconditional distributions over signals, \(p_1(s)\) and \(p_2(s)\), and in their subjective posteriors after observing signal \(s\), \(p_1(\omega\mid s)\) and \(p_2(\omega\mid s)\). We denote the set of subjective signal structures \(m \colon \Omega \to \Delta(S)\) by $\mathcal M$.

Given a decision problem \((A,u)\) and a reservation payoff \(\bar u\), a \textit{signal-contingent protocol}, in short, a \textit{protocol}, is a mapping
\[
\sigma_{(A,u)}\colon S\to\Delta(A)
\]
that assigns a possibly randomized action to each signal. We evaluate protocols using each agent's \text{ex-ante} participation constraint: before observing the signal, an agent weakly prefers committing to the protocol to receiving the reservation payoff \(\bar u\).

\begin{defn}
Given a decision problem \((A,u)\) and a reservation payoff \(\bar u\), a protocol \(\sigma_{(A,u)}\) satisfies the ex-ante participation constraint for agent \(i\in I\) if \[\sum_{\omega\in\Omega}\sum_{s\in S} p_i(\omega,s) \sum_{a\in A}\sigma_{(A,u)}(a\mid s) u(a,\omega) - \bar u \geq 0.\]
\end{defn}

\subsection{A Geometric Interpretation}\label{sec:geometry}

The \text{ex-ante} joint decision problem admits a convenient geometric representation in the spirit of Arrow-Debreu. We represent (i) each agent's subjective beliefs as a vector of state-signal probabilities and (ii) each protocol as a vector of state-signal contingent surplus relative to the reservation payoff.\footnote{Fix \(\bar u\). Add one action \(a^s\) for each signal \(s\in S\), with \(u(a^s,\omega)=\bar u+x(\omega,s)\), and let the protocol choose \(a^s\) after signal \(s\). If a feasible reservation representation is desired, add a reservation action \(a^0\) satisfying \(\sum_{\omega\in\Omega}p(\omega)u(a^0,\omega)=\bar u\); for example, one may take \(u(a^0,\omega)=\bar u\) for every \(\omega\). The induced surplus vector is \(x\).} With these representations, \text{ex-ante} participation constraints become linear inequalities in a Euclidean space. Their intersection is the
geometric object behind our measures of disagreement.

\medskip

\noindent\textbf{Beliefs as vectors.}
To write beliefs as vectors, we identify $\mathbb{R}^{nk}$ with the space of real-valued arrays indexed by $\Omega\times S$.\footnote{Any fixed ordering of $\Omega\times S$ induces such an identification.} For each agent $i\in I$, define the belief vector
\[
p_i
\equiv
\left(p_i(\omega_1,s_1),\ldots,p_i(\omega_n,s_1),\,p_i(\omega_1,s_2),\ldots,p_i(\omega_n,s_k)\right)
\in \mathbb{R}^{nk}.
\]
Thus, $p_i$ stacks agent $i$'s subjective joint probabilities over state-signal pairs.

\medskip

\noindent\textbf{Protocols as surplus vectors.}
Fix a decision problem $(A,u)$ and a protocol $\sigma\colon S\to\Delta(A)$. For each signal $s$ and state $\omega$, the protocol induces the expected payoff
\[
U(\sigma(\cdot\mid s),\omega)
\equiv
\sum_{a\in A}\sigma(a\mid s) u(a,\omega).
\]
We measure payoffs relative to the reservation payoff and define the associated surplus array
\[
x(\omega,s)\equiv U(\sigma(\cdot\mid s),\omega)-\bar u.
\]
Stacking $x(\omega,s)$ across $(\omega,s)\in\Omega\times S$ in the same order as $p_i$ yields a surplus vector
\[
x
\equiv
\left(x(\omega_1,s_1),\ldots,x(\omega_n,s_1),\,x(\omega_1,s_2),\ldots,x(\omega_n,s_k)\right)
\in \mathbb{R}^{nk}.
\]

Although payoffs depend only on the state \(\omega\), protocols are signal-contingent, and the \text{ex-ante} participation constraint averages the induced payoff differences over state-signal realizations. Consequently, the relevant belief object is the joint distribution on \(\Omega\times S\). When agents disagree about likelihoods, they disagree precisely about these state-signal probabilities. Representing surplus vectors in \(\mathbb R^{\Omega\times S}\) keeps track of that disagreement while making participation constraints linear. As we will see, it is also convenient to view \(\mathbb R^{\Omega\times S}\) as an Arrow-Debreu contract space indexed by state-signal pairs, where a primitive security \(\mathbf{1}(\omega,s)\) pays one unit of surplus
if and only if state \(\omega\) obtains and signal \(s\) is realized and a general surplus vector \(x\in\mathbb{R}^{\Omega\times S}\) is a portfolio of such securities, evaluated by agent \(i\) as \(p_i\cdot x\).

\medskip

\noindent\textbf{Participation constraints as half-spaces.}
Agent \(i\)'s \text{ex-ante} participation constraint for protocol \(\sigma\) is equivalent to
\[
p_i\cdot x
=
\sum_{\omega\in\Omega}\sum_{s\in S} p_i(\omega,s)\,x(\omega,s)
=
\sum_{\omega\in\Omega}\sum_{s\in S} p_i(\omega,s)U(\sigma(\cdot\mid s),\omega)-\bar u
\ge 0,
\]
where the dot product is taken in $\mathbb{R}^{nk}$. 

Viewing \(x\) as an arbitrary element of \(\mathbb R^{nk}\), the condition \(p_i\cdot x\ge 0\) describes a closed half-space with normal \(p_i\). We refer to a surplus vector \(x\) that satisfies the \text{ex-ante} participation constraint for agent
\(i\) as \textit{acceptable} for agent \(i\). Accordingly, the set of surplus vectors that are acceptable to both agents is the intersection
\[
C(p_1,p_2)
\equiv
\left\{x\in\mathbb{R}^{nk}\colon \ p_1\cdot x\ge 0, p_2\cdot x\ge 0\right\},
\]
which is a convex cone containing the origin.\footnote{The origin is the reservation-normalized zero vector: a payoff equal to \(\bar u\) at every state-signal pair.} Since \(p_i(\omega,s)\ge 0\) for all \((\omega,s)\), both normals lie
in the positive orthant. We refer to a surplus vector \(x\) that lies in this cone, and, hence, satisfies both agents' \text{ex-ante} participation constraints, as \textit{jointly acceptable}.

The relative orientation of $p_1$ and $p_2$ determines how ``large'' the cone of jointly acceptable surplus vectors is. If $p_1=p_2$ (agents agree on the signal structure), then $C(p_1,p_2)$ is just a single half-space. As the two belief vectors become less aligned, the intersection shrinks. The following lemma formalizes the intuition:

\begin{lemma}\label{lem:extreme}
    If agents fully agree on the signal structure, the cone of acceptable expected utility vectors is maximized (becomes a half-space). If their supports are disjoint, the cone is minimized.
\end{lemma}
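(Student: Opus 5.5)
Lemma~\ref{lem:extreme} has two parts, and I will prove each by comparing $C(p_1,p_2)$ with the largest or smallest cone it could be. "Maximized" and "minimized" are read as maximal and minimal under set inclusion. To make the minimal case meaningful, I compare across pairs with the same total support, so that the relevant class of belief vectors is fixed.

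\emph{Maximal case.} For any pair, $C(p_1,p_2)\subseteq\{x: p_1\cdot x\ge 0\}$, which is a closed half-space because $p_1\neq 0$. So no intersection of two half-spaces can be a strict superset of a half-space, unless both half-spaces coincide. If $p_1=p_2$, the two constraints are identical and $C(p_1,p_2)$ is exactly this half-space.

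Conversely, suppose $C(p_1,p_2)$ is a half-space. Then its normal cone is a single ray, and that ray must contain both $p_1$ and $p_2$, so $p_1=\lambda p_2$ with $\lambda>0$. Both vectors have total mass one, so $\lambda=1$ and $p_1=p_2$. Hence full agreement is exactly the case in which the cone attains the maximal possible shape. Any two half-spaces through the origin with non-collinear normals intersect in a strictly smaller wedge, which makes the maximality strict.

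\emph{Minimal case.} Suppose $\operatorname{supp}p_1\cap\operatorname{supp}p_2=\emptyset$. Then $p_1\cdot p_2=0$, so the normals are orthogonal. Here I would make "minimized" precise in one of two ways and commit to both.

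The first formulation is \emph{uniform acceptability}. Because the supports are disjoint, a surplus vector $x$ can be chosen freely and independently on each support. So $x$ is jointly acceptable if and only if $x|_{\operatorname{supp}p_1}$ satisfies agent 1's constraint and $x|_{\operatorname{supp}p_2}$ satisfies agent 2's. The cone therefore decomposes as a product of two half-spaces on complementary coordinates.

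The second formulation is \emph{minimality among pairs in the positive orthant}. Take any other pair $q_1,q_2$ on the same total support. Each $q_i$ is then a nonnegative combination of vectors supported on $\operatorname{supp}p_1$ and $\operatorname{supp}p_2$. I would show $C(p_1,p_2)\subseteq C(q_1,q_2)$ whenever each $q_i$ lies in the cone generated by $p_1$ and $p_2$. If $q_i=\alpha p_1+\beta p_2$ with $\alpha,\beta\ge 0$, then $x\in C(p_1,p_2)$ gives $q_i\cdot x\ge 0$.

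The main obstacle is this second step. It requires arguing that, among all pairs of nonnegative normals, orthogonal normals with disjoint supports generate the widest dual cone $\operatorname{cone}(p_1,p_2)$ within the relevant class. I would proceed through the dual: $C(p_1,p_2)$ is the dual cone of $\operatorname{cone}(p_1,p_2)$. Two nonnegative vectors with disjoint supports are as far apart as nonnegative vectors can be, since their angle is the maximal $\pi/2$ attainable in the positive orthant. So their generated cone is maximal, and its dual $C(p_1,p_2)$ is minimal. I expect to need the restriction that the comparison pairs have supports within the union of the two supports. Without it, no single pair is inclusion-minimal, and the claim should be stated as "minimized up to the maximal angle $\pi/2$."
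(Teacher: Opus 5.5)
Your maximal case is fine and is essentially the paper's argument (the two constraints coincide when $p_1=p_2$); the converse via the dual ray is a correct addition.

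The gap is in the minimal case. Your second formulation needs every competing pair $(q_1,q_2)$ with the same total support to satisfy $q_i\in\operatorname{cone}(p_1,p_2)$. That does not follow. A nonnegative combination of vectors supported on $\operatorname{supp}p_1$ and $\operatorname{supp}p_2$ is generally not a combination of $p_1$ and $p_2$ themselves. The step ``maximal angle, hence $\operatorname{cone}(p_1,p_2)$ is the widest dual cone'' conflates aperture with set inclusion.

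The claim is in fact false, even with your support restriction and inside the paper's model class. Take $|\Omega|=|S|=2$, the uniform prior, and coordinates ordered $(\omega_1,s_1),(\omega_2,s_1),(\omega_1,s_2),(\omega_2,s_2)$. Consider $p_1=(\tfrac12,\tfrac12,0,0)$ and $p_2=(0,0,\tfrac12,\tfrac12)$ (always $s_1$, always $s_2$), and $q_1=(\tfrac12,0,0,\tfrac12)$ and $q_2=(0,\tfrac12,\tfrac12,0)$ (fully revealing, and its relabeling). Both pairs have disjoint supports whose union is every cell. Yet $x=(-1,1,1,-1)\in C(p_1,p_2)\setminus C(q_1,q_2)$ and $x=(-1,-1,1,1)\in C(q_1,q_2)\setminus C(p_1,p_2)$. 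So no inclusion-minimum exists in any such class.

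What does hold in inclusion terms is weaker: disjoint-support pairs are minimal elements. Indeed, $C(q_1,q_2)\subseteq C(p_1,p_2)$ forces $p_1,p_2\in\operatorname{cone}(q_1,q_2)$, and nonnegativity with disjoint supports then forces $\{q_1,q_2\}=\{p_1,p_2\}$. But overlapping pairs such as $(p_1,q_2)$ above are also minimal, so this property does not single out disjointness.

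The lemma is meant in the angular sense, and that is the whole of the paper's proof:
\begin{itemize}
\item since $p_1,p_2\ge0$, we have $p_1\cdot p_2\ge0$, so the angle $\theta$ between the normals lies in $[0,\pi/2]$;
\item $\theta=\pi/2$ iff $p_1\cdot p_2=0$ iff the supports are disjoint;
\item the wedge $C(p_1,p_2)$ has opening $\pi-\theta$ inside $\operatorname{span}\{p_1,p_2\}$, and this is smallest exactly when $\theta=\pi/2$.
\end{itemize}
Equivalently, the cone covers a share $(\pi-\theta)/(2\pi)$ of the unit sphere, as in the paper's uniform-sphere proposition.

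You reach this only as a fallback in your last sentence; it should be the argument. Drop the claim that restricting supports to the union restores inclusion-minimality. Your product decomposition is a correct description of the cone, but it does not establish minimality of anything.
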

If \(p_1=p_2\), then \(C(p_1,p_2) = \left\{x \colon \ p_1\cdot x\ge 0\right\}\), so the jointly acceptable cone is a single half-space. If \(p_1\cdot p_2=0\), equivalently, if the supports of \(p_1\) and \(p_2\) are disjoint, then the two normals are orthogonal. Since both normals lie in the positive orthant, this is the largest possible angle between them. In this case, joint acceptability separates across supports: each agent evaluates surplus only on events to which the other assigns zero probability. Thus, gains on one agent's support cannot compensate losses on the other's support and the cone is minimized.

\medskip
\noindent\textbf{From surplus vectors to an order over beliefs.}
A key advantage of this representation is that it cleanly separates beliefs from the decision problem. Every decision problem \((A,u)\) protocol \(\sigma\)  induces a surplus vector \(x\in\mathbb{R}^{nk}\) (recall that a decision problem includes the reservation utility). For any fixed decision problem, the set of attainable surplus vectors is restricted by the available actions and payoffs. But the belief-dependent part of joint acceptability is always captured by the same cone \(C(p_1,p_2)\). With a reservation-payoff benchmark, the ambient surplus space is the behaviorally relevant domain. That is, for every \(x\in\mathbb R^{\Omega\times S}\), there is a finite decision problem and protocol that generate \(x\) as surplus relative to the reservation payoff.\footnote{Under the prior-optimal-action benchmark variant in Appendix \ref{app:payoff_vectors_decision_problems} this is no longer true, which is why the result is weaker there; specifically not every surplus vector maps to a decision problem (where the reservation utility is prior optimal).}  

This motivates our approach. Rather than measuring disagreement within a particular decision problem, we compare belief pairs by comparing their cones of jointly acceptable surplus vectors. A pair exhibits more agreement when it renders a weakly larger set of
surplus vectors jointly acceptable.

Finally, each belief vector is generated by the common prior \(p\) together with a subjective signal structure. For \(m\in\mathcal M\), write \(p_m(\omega,s)\equiv p(\omega)m(s\mid\omega)\), so that a pair of subjective signal structures \((m,m')\) induces the jointly acceptable cone \(C(p_m,p_{m'})\). Our next section defines the inclusion preorder by comparing these cones.

\section{Measuring Disagreement}\label{sec:inclusion}

Our notion of agreement is based directly on the cones of \text{ex-ante} joint acceptability introduced in Section~\ref{sec:geometry}. The idea is simple: one pair of experiments exhibits \emph{weakly
more agreement} than another if every surplus vector that is jointly acceptable under the latter is also jointly acceptable under the former.

Formally, fix a common prior $p\in\Delta(\Omega)$ and two experiments $m,m'\in\mathcal M$. Recall that each experiment $m$ induces a joint distribution $p_m\in\Delta(\Omega\times S)$ via $p_m(\omega,s)=p(\omega)m(s\mid\omega)$, and that the corresponding jointly acceptable cone is
\[
C(p_m,p_{m'})
\equiv
\{x\in\mathbb{R}^{\Omega\times S}\colon\ p_m\cdot x\ge 0, p_{m'}\cdot x\ge 0\}.
\]

\begin{defn}[Inclusion preorder]\label{def:inclusion}
We say that $(m,m')$ \textit{exhibits weakly more agreement in the inclusion order than $(\hat m,\hat m')$ at prior $p$},
\((m,m')\succeq^{I}_p(\hat m,\hat m')\), if \(C(p_{\hat m},p_{\hat m'}) \subseteq C(p_m,p_{m'})\).
\end{defn}

By construction, if \((m,m')\succeq^I_p(\hat m,\hat m')\), then for \textit{every} decision problem \((A,u)\), every reservation payoff \(\bar u\), and every protocol \(\sigma\), whenever \(\sigma\) is \textit{ex-ante} jointly acceptable for \((\hat m,\hat m')\) at prior \(p\), it is also \textit{ex-ante} jointly acceptable for \((m,m')\) at the same prior. Conversely, because finite decision problems span all of \(\mathbb R^{\Omega\times S}\) once payoffs are measured relative to a reservation payoff, this uniform behavioral implication implies cone inclusion. Thus, higher agreement means weakly fewer \textit{ex-ante} contracting frictions uniformly across decision problems.

The relation $\succeq^{I}_p$ is generally incomplete. It is a preorder on pairs of experiments (distinct pairs can induce the same cone), and it induces a partial order on the induced cones of jointly-acceptable surplus vectors.\footnote{The reason why $\succeq^I_p$ is not a partial order on model pairs is that, given $m\neq m'$, we have $(m,m')\sim^I(m',m)$. Hence, antisymmetry fails. Nonetheless, a partial order on models remains indeed defined, if we identify pairs of models up to permutation.}

For prior-independent comparisons, we use the following definition.

\begin{defn}[Uniform inclusion preorder]\label{def:uniform_inclusion}
We say that \textit{\((m,m')\) exhibits uniformly weakly more agreement in the inclusion order than
\((\hat m,\hat m')\)}, \((m,m')\succeq^{I}(\hat m,\hat m')\), if and only if for all full-support \(p\), \((m,m')\succeq^{I}_p(\hat m,\hat m')\).
\end{defn}

The next result turns this geometric definition into a simple convexity test. The acceptable-surplus cone is defined in the payoff space \(\mathbb R^{\Omega\times S}\), but its dual is the cone spanned by the two belief vectors that define the agents' participation constraints. Hence making the jointly acceptable cone larger is equivalent to making the cone generated by the two normals smaller. Because the normals are probability vectors, this dual cone comparison reduces to a line-segment comparison: the more agreeable pair must lie inside the convex hull of the less agreeable pair.

\begin{proposition}\label{prop:inclusion_char}\label{prop:incl_prior_indep} Fix models \(m,m',\hat m,\hat m'\in\mathcal M\). For any full-support prior \(p\), \((m,m')\succeq^{I}_p(\hat m,\hat m')\)
if and only if there exist \(\alpha,\alpha'\in[0,1]\) such that
\[
p_m=\alpha p_{\hat m}+(1-\alpha)p_{\hat m'},
\qquad \text{and} \qquad
p_{m'}=\alpha' p_{\hat m}+(1-\alpha')p_{\hat m'}.
\]
Moreover, the following are equivalent:
\begin{enumerate}
    \item\label{prop1item1} There exists a full-support prior \(p\in\Delta(\Omega)\) such that \((m,m')\succeq^{I}_p(\hat m,\hat m')\).
    \item\label{prop1item2} \((m,m')\succeq^{I}(\hat m,\hat m')\).
    \item\label{prop1item3} There exist \(\alpha,\alpha'\in[0,1]\) such that
    \(m=\alpha\hat m+(1-\alpha)\hat m'\) and \(m'=\alpha'\hat m+(1-\alpha')\hat m'\).
\end{enumerate}
\end{proposition}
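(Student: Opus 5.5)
The plan is to prove the first claim by duality and then transfer it from joint distributions to signal structures using the fact that all belief vectors share the $\Omega$-marginal $p$.

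\textbf{Step 1 (cone duality).} By Farkas' lemma, the dual of $C(p_{\hat m},p_{\hat m'})=\{x: p_{\hat m}\cdot x\ge 0,\ p_{\hat m'}\cdot x\ge 0\}$ is the finitely generated, hence closed, cone
\[
\operatorname{cone}(p_{\hat m},p_{\hat m'})=\{\lambda p_{\hat m}+\mu p_{\hat m'}:\lambda,\mu\ge 0\}.
\]
Cone inclusion reverses under duality. Since $C(p_m,p_{m'})^*=\operatorname{cone}(p_m,p_{m'})$, we get
\[
C(p_{\hat m},p_{\hat m'})\subseteq C(p_m,p_{m'}) \iff p_m,\,p_{m'}\in\operatorname{cone}(p_{\hat m},p_{\hat m'}).
\]
The direction ($\Leftarrow$) can also be checked directly: if $p_m=\lambda p_{\hat m}+\mu p_{\hat m'}$ with $\lambda,\mu\ge 0$, then $p_{\hat m}\cdot x\ge0$ and $p_{\hat m'}\cdot x\ge0$ give $p_m\cdot x\ge 0$. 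For ($\Rightarrow$), suppose $p_m\notin\operatorname{cone}(p_{\hat m},p_{\hat m'})$. The cone is closed and convex, so there is a separating $x$ with $p_{\hat m}\cdot x\ge0$, $p_{\hat m'}\cdot x\ge 0$ and $p_m\cdot x<0$. Then $x$ lies in the smaller cone but not in the larger one.

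\textbf{Step 2 (cone to segment).} Every belief vector is a probability vector, so it has coordinate sum $1$. Summing coordinates in $p_m=\lambda p_{\hat m}+\mu p_{\hat m'}$ gives $\lambda+\mu=1$, so $p_m$ is a convex combination with $\alpha=\lambda\in[0,1]$. The same argument gives $\alpha'$ for $p_{m'}$. This proves the first claim for each full-support $p$. Full support is not even needed at this step.

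\textbf{Step 3 (prior-independent equivalences).} The implication (2)$\Rightarrow$(1) is trivial.

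For (3)$\Rightarrow$(2), multiply $m(s\mid\omega)=\alpha\hat m(s\mid\omega)+(1-\alpha)\hat m'(s\mid\omega)$ by $p(\omega)$. This yields $p_m=\alpha p_{\hat m}+(1-\alpha)p_{\hat m'}$ for every $p$, and likewise for $m'$. The first claim then gives inclusion at every full-support prior.

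For (1)$\Rightarrow$(3), take the full-support $p$ from (1). The first claim gives $p(\omega)m(s\mid\omega)=p(\omega)[\alpha\hat m(s\mid\omega)+(1-\alpha)\hat m'(s\mid\omega)]$ for all $(\omega,s)$. Since $p(\omega)>0$, we can divide by $p(\omega)$ and obtain the identity for the signal structures, with the same constants $\alpha$ and $\alpha'$.

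\textbf{Main obstacle.} The substantive step is the separation argument in Step 1, and there it is essential that the cone generated by two vectors is closed. Step 3 has one subtlety: we need the weights to be common across states. This is exactly what the joint-vector identity delivers, because a single pair $(\alpha,\alpha')$ works for all coordinates $(\omega,s)$ simultaneously. It is also where full support is used, since the division by $p(\omega)$ requires $p(\omega)>0$. Without full support, the experiments would be pinned down only on states with $p(\omega)>0$.
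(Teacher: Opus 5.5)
Your proposal is correct and follows essentially the same route as the paper's proof: dualize the cone inclusion to obtain $p_m,p_{m'}\in\operatorname{cone}(p_{\hat m},p_{\hat m'})$, use that all belief vectors are probability vectors to force the nonnegative weights to sum to one, and pass between joint distributions and signal structures by multiplying or dividing by $p(\omega)>0$. Your explicit separation argument for the duality step only spells out what the paper takes as the standard identity $C(p,p')^*=\operatorname{cone}(p,p')$.
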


\Cref{prop:inclusion_char} gives the inclusion order a clear interpretation. A pair \((m,m')\) is more agreeable than \((\hat m,\hat m')\) if and only if both \(m\) and \(m'\) lie on the line segment between \(\hat m\) and \(\hat m'\). Thus the less agreeable pair \((\hat m,\hat m')\) brackets the more agreeable pair. Moving up the agreement order is therefore literally an averaging operation. Each agent's model in the more agreeable pair is a convex combination of the two models in the less agreeable pair, so neither agent holds an interpretation that is more extreme than the disagreement already present in the benchmark pair.

The second part of the proposition shows that this comparison is intrinsic to the experiments themselves. Although the definition of \(\succeq^I_p\) is written in terms of prior-weighted joint distributions \(p_m(\omega,s)=p(\omega)m(s\mid \omega)\), the full-support prior only rescales rows of the experiment by positive factors. These factors cancel from the 
convex-hull condition. Hence, if the comparison holds at one full-support prior, it holds at every full-support prior. The inclusion order is therefore not a property of a particular prior or decision problem; it is a prior-free partial order over pairs of subjective signal structures.

This characterization also makes clear why the order is incomplete. Two pairs are comparable only when the two experiments in one pair both lie on the line segment generated by the other pair. When the corresponding segments in experiment space point in different directions, neither pair brackets the other, and the inclusion order leaves them unranked.

We now illustrate our results in a simple environment.  

\begin{example}\label{ex:basic}
We fix \(|\Omega|=|S|=2\), write \(\Omega=\{\omega_1,\omega_2\}\),
\(S=\{s_1,s_2\}\), and impose a uniform prior
\(p(\omega_1)=p(\omega_2)=\tfrac{1}{2}\). In this case, an experiment is a
\(2\times 2\) row-stochastic matrix
\[
  m = \begin{pmatrix} \theta_{11} & \theta_{12} \\ \theta_{21} & \theta_{22}
      \end{pmatrix},
  \qquad \text{with} \qquad
  \theta_{i1}+\theta_{i2}=1,
\]
and where \(\theta_{ij}=m(s_j\mid\omega_i)\). The induced joint distribution is
\(p_m(\omega_i,s_j)=\tfrac{1}{2}\theta_{ij}\).

Consider first the pair \((A,A')\), where
\[
  A = \begin{pmatrix} 0.7 & 0.3 \\ 0.4 & 0.6 \end{pmatrix},
  \qquad \text{and} \qquad
  A' = \begin{pmatrix} 0.5 & 0.5 \\ 0.5 & 0.5 \end{pmatrix}.
\]
Experiment \(A\) is (partially) discriminating, while \(A'\) is uninformative. Now define the identical
pair
\[
  C = C' = \frac{1}{2}A + \frac{1}{2}A'
  = \begin{pmatrix} 0.60 & 0.40 \\ 0.45 & 0.55 \end{pmatrix},
\]
and the intermediate pair
\[
  B = \frac{1}{2}A+\frac{1}{2}C
    = \begin{pmatrix} 0.65 & 0.35 \\ 0.425 & 0.575 \end{pmatrix},
  \qquad \text{and} \qquad
  B' = \frac{1}{2}A'+\frac{1}{2}C'
     = \begin{pmatrix} 0.55 & 0.45 \\ 0.475 & 0.525 \end{pmatrix}.
\]

By \Cref{prop:inclusion_char},
\[
(C,C') \succeq^I (B,B') \succeq^I (A,A').
\]
Indeed,
\[
C=C'=\tfrac12 B+\tfrac12 B',
\qquad
B=\tfrac34 A+\tfrac14 A',
\qquad \text{and} \qquad
B'=\tfrac14 A+\tfrac34 A'.
\]

To illustrate the incompleteness of the inclusion preorder, consider
\[
  D = \begin{pmatrix} 0.8 & 0.2 \\ 0.2 & 0.8 \end{pmatrix},
  \qquad \text{and} \qquad
  D' = \begin{pmatrix} 0.7 & 0.3 \\ 0.3 & 0.7 \end{pmatrix}.
\]
Both \(D\) and \(D'\) are symmetric experiments, lying on the main diagonal of the
\((\theta_{11},\theta_{22})\) plane. The pair \((D,D')\) is incomparable with
\((A,A')\), \((B,B')\), and \((C,C')\). For example, \(D\in\conv\{A,A'\}\) would require
\(0.8=\alpha 0.7+(1-\alpha)0.5\), so \(\alpha=3/2\notin[0,1]\). Conversely, \(A\in\conv\{D,D'\}\) would require
\(\alpha=0\) from the \((\omega_1,s_1)\) entry, but then the \((\omega_2,s_1)\) entry would
require \(0.4=0.3\), a contradiction. The remaining checks are analogous.

Figure~\ref{fig:fig1} provides a two-dimensional representation of the example. We represent each experiment with \((\theta_{11},\theta_{22})\): the probability of \(s_1\) in state \(\omega_1\) and the probability of \(s_2\) in state \(\omega_2\). Dashed segments connect the two members of
\((A,A')\) and \((B,B')\), and the pair \((C,C')\) is a single point (since \(C=C'\)). 
\begin{figure}[h]
  \centering
  \begin{tikzpicture}
\begin{axis}[
  width=9cm, height=9cm,
  axis equal image,
  xmin=0, xmax=1, ymin=0, ymax=1,
  xtick={0,0.2,...,1}, ytick={0,0.2,...,1},
  minor tick num=1,
  xlabel={$\theta_{11}$},
  ylabel={$\theta_{22}$},
  xlabel style={font=\small},
  ylabel style={font=\small},
  tick label style={font=\footnotesize},
  grid=both,
  grid style={line width=0.2pt, draw=gray!30},
  major grid style={line width=0.3pt, draw=gray!40},
  axis line style={line width=0.6pt},
  clip=false,
  title={\small Experiment space: pairs and inclusion chain},
  title style={font=\small\itshape, at={(0.5,1.02)}},
]

\addplot[dashed, gray!50, line width=0.7pt] coordinates {(0.70,0.60)(0.50,0.50)};
\addplot[dashed, gray!50, line width=0.7pt] coordinates {(0.65,0.575)(0.55,0.525)};

\addplot[navy,        mark=*, mark size=2pt, only marks] coordinates {(0.70,0.60)};
\addplot[navy!60,     mark=*, mark size=2pt, only marks] coordinates {(0.50,0.50)};
\addplot[red!65!black,mark=*, mark size=2pt, only marks] coordinates {(0.65,0.575)};
\addplot[red!45!black,mark=*, mark size=2pt, only marks] coordinates {(0.55,0.525)};
\addplot[green!45!black, mark=*, mark size=2pt, only marks] coordinates {(0.60,0.55)};
\addplot[violet!80,   mark=*, mark size=2pt, only marks] coordinates {(0.80,0.80)};
\addplot[violet!55,   mark=*, mark size=2pt, only marks] coordinates {(0.70,0.70)};

\node[font=\small\bfseries, navy, anchor=south]
    at (axis cs:0.700,0.600) {$A$};
\node[font=\small\bfseries, navy!70, anchor=north]
    at (axis cs:0.500,0.500) {$A'$};
\node[font=\small\bfseries, red!65!black, anchor=north west]
    at (axis cs:0.655,0.575) {$B$};
\node[font=\small\bfseries, red!45!black, anchor=south east]
    at (axis cs:0.545,0.525) {$B'$};
\node[font=\small\bfseries, green!45!black, anchor=south]
    at (axis cs:0.600,0.550) {$C$};
\node[font=\small\bfseries, violet!80, anchor=south]
    at (axis cs:0.800,0.800) {$D$};
\node[font=\small\bfseries, violet!55, anchor=north]
    at (axis cs:0.700,0.700) {$D'$};

\end{axis}
\end{tikzpicture}
  \caption{Experiment space $[0,1]^2$ with axes $\theta_{11}$ (horizontal) and
    $\theta_{22}$ (vertical).  The inclusion chain is
    $(A,A')\prec_I(B,B')\prec_I(C,C')$.}
  \label{fig:fig1}
\end{figure}
\end{example}

\section{Applications}\label{sec:app}

This section presents three applications of the geometry developed above. First, we show that greater disagreement expands the scope for mutually acceptable speculative trade: agents who interpret the same signal differently may assign different values to the same state-signal-contingent side bet. Second, we study how changes in disagreement affect the \text{ex-ante} Pareto frontier and show that, after a suitable normalization, greater agreement expands the frontier. Third, we relate disagreement to rationalizing models, namely single Bayesian models under which an outside observer could rationalize the agents' jointly acceptable behavior as optimal. These applications show that the inclusion order has economic content beyond the geometry used to define it.

\subsection{Speculative Trade}\label{sec:spec} Recall that a surplus vector \(x\in\mathbb{R}^{\Omega\times S}\) is an Arrow-Debreu portfolio over state-signal pairs, with an \textit{ex-ante} value of \(p_m\cdot x\) under model \(m\). The cone \(C(p_m,p_{m'})\) describes when the same surplus vector is acceptable to both agents. This subsection asks a different, but closely related, question: if \(x\) is instead understood as a zero-net-supply side bet, with one agent receiving \(x\) and the other receiving \(-x\), which transfers make both agents willing to trade?\footnote{The trade is speculative in the sense that the payoff environment, the prior, and the surplus vector \(x\) are common; only the \textit{ex-ante} evaluations \(p_m\cdot x\) and \(p_{m'}\cdot x\) may differ.}

To wit, suppose agents can make transfers in the same utility units as the surplus vector. If the agent with model \(m\) receives \(x\) and pays transfer \(q\), the two agents' \textit{ex-ante}
gains are
\[
p_m\cdot x-q
\qquad\text{and}\qquad
q-p_{m'}\cdot x.
\]
Accordingly, this trade is acceptable if and only if
\(p_{m'}\cdot x\leq q\leq p_m\cdot x\). Naturally, if the inequality between valuations is reversed, the agents reverse sides of the bet.
In sum, regardless of which agent values \(x\) more, the set of mutually acceptable transfers is the interval between the two \textit{ex-ante} valuations:
\[
\mathcal{T}_p(x;m,m')
\coloneqq
\left[
\min\{p_m\cdot x,p_{m'}\cdot x\},
\max\{p_m\cdot x,p_{m'}\cdot x\}
\right].
\]

The interval \(\mathcal{T}_p(x;m,m')\) is precisely a disagreement interval. If \(p_m=p_{m'}\), the two agents assign the same value to every side bet \(x\), so the interval collapses to a point.
More generally,
\[
\mathcal{T}_p(x;m,m')
=
\left[
\frac{p_m+p_{m'}}{2}\cdot x-\frac{1}{2}|p_m\cdot x-p_{m'}\cdot x|,
\frac{p_m+p_{m'}}{2}\cdot x+\frac{1}{2}|p_m\cdot x-p_{m'}\cdot x|
\right].
\]
The midpoint is the average \textit{ex-ante} valuation of \(x\), and the radius is one half of the absolute valuation gap. The interval is wide precisely when the two agents disagree sharply about the value of \(x\).

The valuation gap also decomposes into two components, mirroring the prospective/retrospective distinction in \citet{bohren2023behavioral}. When the relevant signal marginals are positive under both models, writing \(x_s(\omega)\coloneqq x(\omega,s)\), we have
\[
\begin{aligned}
p_m\cdot x-p_{m'}\cdot x
&=\sum_{s\in S}\left(p_m(s)-p_{m'}(s)\right)
\frac{p_m(\cdot\mid s)+p_{m'}(\cdot\mid s)}{2}\cdot x_s\\
&\quad+\sum_{s\in S}\frac{p_m(s)+p_{m'}(s)}{2}
\left(p_m(\cdot\mid s)-p_{m'}(\cdot\mid s)\right)\cdot x_s.
\end{aligned}
\]
The first term is prospective speculative disagreement: agents disagree about how likely a signal is. The second term is retrospective speculative disagreement: conditional on the signal, agents disagree about what it means. 

The next proposition provides the speculative-trade meaning of the inclusion order.

\begin{proposition}\label{prop:spec} Take any full-support prior \(p\). Then,
\[(m,m') \succeq^I_p(\hat m,\hat m') \qquad
\Longleftrightarrow \qquad
\mathcal{T}_p\left(x;m,m'\right)\subseteq\mathcal{T}_p\left(x;\hat m,\hat m'\right) \quad \forall \ x \in \mathbb{R}^{\Omega\times S}.\]
\end{proposition}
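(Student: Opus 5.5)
The plan is to reduce both sides to the convex-hull characterization in \Cref{prop:inclusion_char}. Write \(a=p_{\hat m}\), \(b=p_{\hat m'}\), \(c=p_m\), \(d=p_{m'}\). By \Cref{prop:inclusion_char}, the left-hand side is equivalent to \(c,d\in\conv\{a,b\}\). The right-hand side says that, for every \(x\), the interval spanned by \(c\cdot x\) and \(d\cdot x\) lies inside the interval spanned by \(a\cdot x\) and \(b\cdot x\). Since an interval is convex, this holds exactly when each endpoint lies in the larger interval, i.e. when \(c\cdot x\) and \(d\cdot x\) both lie between \(a\cdot x\) and \(b\cdot x\), for all \(x\).

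For \((\Rightarrow)\), take \(c=\alpha a+(1-\alpha)b\). Then \(c\cdot x=\alpha(a\cdot x)+(1-\alpha)(b\cdot x)\) lies between \(a\cdot x\) and \(b\cdot x\), and the same holds for \(d\). So \(\mathcal T_p(x;m,m')\subseteq\mathcal T_p(x;\hat m,\hat m')\) for every \(x\).

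For \((\Leftarrow)\), it suffices to show that if \(c\cdot x\in[\min(a\cdot x,b\cdot x),\max(a\cdot x,b\cdot x)]\) for all \(x\), then \(c\in\conv\{a,b\}\); the same argument then applies to \(d\).

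\begin{itemize}
\item First, take any \(x\) with \(a\cdot x=b\cdot x\). This forces \(c\cdot x=a\cdot x\), so \(c-a\) is orthogonal to \((a-b)^\perp\). Hence \(c-a\in\operatorname{span}\{a-b\}\), i.e. \(c=a+t(b-a)\) for some real \(t\). If \(a=b\), this already gives \(c=a\).
\item Second, choose \(x=b-a\) when \(a\neq b\). Then \(c\cdot x-a\cdot x=t\|b-a\|^2\). Membership in the interval between \(a\cdot x\) and \(a\cdot x+\|b-a\|^2\) forces \(t\in[0,1]\).
\end{itemize}

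Then \(c,d\in\conv\{a,b\}\), and \Cref{prop:inclusion_char} returns \((m,m')\succeq^I_p(\hat m,\hat m')\).

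An alternative route for \((\Leftarrow)\) is to show directly that interval inclusion implies cone inclusion. If \(x\in C(a,b)\), then \(a\cdot x\ge0\) and \(b\cdot x\ge0\), so both \(c\cdot x\) and \(d\cdot x\) are at least \(\min(a\cdot x,b\cdot x)\ge0\). This gives \(x\in C(c,d)\) immediately and avoids the linear algebra. I would present this version, keeping the convex-combination computation only for \((\Rightarrow)\).

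The main obstacle is the \((\Leftarrow)\) direction. It is handled either by the orthogonal-complement argument or, more cleanly, by the observation that nonnegativity of both endpoints is inherited by any subinterval.
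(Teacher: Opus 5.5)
Your proposal is correct, and your preferred $(\Leftarrow)$ argument takes a different route from the paper's. The $(\Rightarrow)$ direction is exactly the paper's: invoke \Cref{prop:inclusion_char} and dot the mixture identities with $x$.

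For $(\Leftarrow)$, the paper argues by strict separation. If $p_m\notin\conv\{p_{\hat m},p_{\hat m'}\}$, some $x$ places $p_m\cdot x$ outside $\mathcal T_p(x;\hat m,\hat m')$. Hence both $p_m$ and $p_{m'}$ lie on the segment, and \Cref{prop:inclusion_char} turns this back into cone inclusion. Your linear-algebra variant reaches the same intermediate conclusion. The hyperplane $(a-b)^\perp$ pins $c$ to the line through $a$ and $b$, and the test vector $x=b-a$ pins $t\in[0,1]$. You trade the separation theorem for a computation that also exhibits the mixing weight.

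Your preferred variant skips the intermediate convex-hull step entirely. On $x\in C(p_{\hat m},p_{\hat m'})$ the lower endpoint of the larger interval is nonnegative, so both endpoints of the smaller interval are too, which is cone inclusion by definition. This is shorter and shows that this direction is essentially definitional: it uses only sign information on the cone. The paper's detour through the convex hull only lands on the easy half of \Cref{prop:inclusion_char}.

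What the paper's route buys is a standalone fact: nesting of the value ranges $\mathcal T_p(x;\cdot)$ over all linear functionals $x$ characterizes segment containment. In other words, the mixture structure can be read directly off the speculative intervals.

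You can also make $(\Rightarrow)$ self-contained in the same spirit. All four vectors have unit mass, so write $\mathbf 1$ for the all-ones vector and $\ell,h$ for the endpoints of $\mathcal T_p(x;\hat m,\hat m')$. Applying the cone inclusion to $x-\ell\mathbf 1$ and to $h\mathbf 1-x$ gives $\ell\le p_m\cdot x\le h$ and $\ell\le p_{m'}\cdot x\le h$, with no appeal to \Cref{prop:inclusion_char} at all.
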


\Cref{prop:spec} shows that more agreement enlarges the cone of jointly acceptable surplus
vectors, but shrinks the speculative-transfer interval for every side bet. The two comparisons
ask opposite questions. The cone asks which surplus vectors both agents value weakly
positively; the transfer interval asks how far apart their valuations of a fixed zero-net-supply
side bet are.

By combining \Cref{prop:spec} with \Cref{prop:inclusion_char}, we can also interpret the uniform inclusion preorder as a prior-independent nesting order on \textit{ex-ante} mutually agreeable transfer intervals. If
\((m,m')\succeq^I(\hat m,\hat m')\), then for every full-support prior \(p\) and every
\(x\in\mathbb{R}^{\Omega\times S}\),
\[
\mathcal{T}_p(x;m,m')
\subseteq
\mathcal{T}_p(x;\hat m,\hat m').
\]
Conversely, if this nesting holds for some full-support prior and every \(x\), then
\((m,m')\succeq^I(\hat m,\hat m')\).

In order to illustrate this result, we can use the experiments from Example \ref{ex:basic}.  The width of the jointly acceptable transfer interval for surplus vector $x$ is
$|(p_m-p_{m'})\cdot x|$.  Since $p_B-p_{B'}=\tfrac12(p_A-p_{A'})$,
\[
  |(p_A-p_{A'})\cdot x| \geq |(p_B-p_{B'})\cdot x| \geq 0 =
  |(p_C-p_{C'})\cdot x| \quad\text{for every }x.
\]

For the surplus vector \(x=(1,0,-1,0)^\top\), which pays \(+1\) at \((\omega_1,s_1)\), \(-1\) at \((\omega_1,s_2)\),
and zero otherwise, we have
\[
p_A\cdot x=0.20,\qquad \text{and} \qquad p_{A'}\cdot x=0,
\]
so \(\mathcal{T}_p(x;A,A') = \left[0,0.20\right]\), with width \(0.20\). For \((B,B')\),
\(p_B\cdot x=0.15\) and \(p_{B'}\cdot x=0.05\), with width \(0.10\). For \((C,C')\),
\(p_C\cdot x = 0.10 = p_{C'}\cdot x\), which means there is no speculative trade at all.

In short, agreement has opposite implications for cooperation and speculation. On the one hand, greater agreement makes it easier to find surplus vectors that both agents value positively, but harder to find zero-net-supply bets that both agents are willing to trade. On the other hand, greater disagreement expands the scope for speculative trade because it increases the set of contracts over which agents assign different \textit{ex-ante} values.

\subsection{Agreement and \text{ex-ante} Pareto Frontiers}\label{sec:pareto}

We now use the cone geometry to study how likelihood disagreement shapes \text{ex-ante} Pareto frontiers within a fixed decision problem. Fix a decision problem \((A,u)\) and a reservation payoff \(\bar u\). Suppressing dependence on \(\bar u\), let \(X(A,u)\subset \mathbb R^{nk}\) denote the set of surplus vectors induced by signal-contingent protocols for \((A,u)\), measured relative to \(\bar u\).

For a pair \((m,m')\), define the set of jointly acceptable surplus vectors in decision problem
\((A,u)\) as
\[
K(m,m';A,u)
\equiv
X(A,u)\cap C(p_m,p_{m'})
\subseteq \mathbb R^{nk},
\]
and define the induced payoff set in \(\mathbb R^2\) as
\[
J(m,m';A,u)
\equiv
\left\{
(p_m\cdot x, p_{m'}\cdot x)\in\mathbb R^2:
x\in K(m,m';A,u)
\right\}.
\]
Let \(\mathcal F(m,m';A,u)\subseteq J(m,m';A,u)\) denote the
\text{ex-ante} Pareto frontier in payoff space, i.e., the set of Pareto-undominated payoff
pairs in \(J(m,m';A,u)\).

Equivalently, exposed points of the frontier are selected by weighted planner problems of the form
\[
\max_{x\in K(m,m';A,u)}
\lambda\,p_m\cdot x+(1-\lambda)\,p_{m'}\cdot x,
\qquad \text{for } \lambda\in[0,1].
\]
The constraints defining \(K(m,m';A,u)\) impose \text{ex-ante} participation for both agents, while the objective weights the two agents' \text{ex-ante} surpluses.

Our goal is to understand how this frontier changes as agreement increases in the inclusion order. If \((m,m')\succeq^I(\hat m,\hat m')\), then \(C(p_{\hat m},p_{\hat m'})\subseteq C(p_m,p_{m'})\),
so moving up the inclusion order from \((\hat m,\hat m')\) to \((m,m')\) enlarges the set of surplus vectors that satisfy both participation constraints.

A first instinct is to conclude that the planner must be weakly better off under the more agreeable pair, since the jointly acceptable set expands. This conclusion would be correct if
beliefs only affected the \textit{constraints}. The difficulty is that beliefs also enter the \textit{objective}: the same surplus vector \(x\) is evaluated using different inner products under
\((p_m,p_{m'})\) and under \((p_{\hat m},p_{\hat m'})\). Hence, moving up in the inclusion order simultaneously enlarges the set of jointly acceptable surplus vectors and reweights how any given surplus vector maps into agents' \text{ex-ante} payoffs.

The inclusion order disciplines this reweighting. By \Cref{prop:inclusion_char}, the cone inclusion
\(C(p_{\hat m},p_{\hat m'})\subseteq C(p_m,p_{m'})\) holds if and only if there exist \(\alpha,\alpha'\in[0,1]\) such that
\(p_m=\alpha p_{\hat m}+(1-\alpha)p_{\hat m'}\) and \(p_{m'}=\alpha' p_{\hat m}+(1-\alpha')p_{\hat m'}\). Equivalently, there exists a row-stochastic matrix
\[
M
\equiv
\begin{pmatrix}
\alpha & 1-\alpha\\
\alpha' & 1-\alpha'
\end{pmatrix}
\]
such that
\(M(p_{\hat m},p_{\hat m'})^\top=(p_m,p_{m'})^\top\). Consequently, for every surplus vector \(x\), \((p_m\cdot x, p_{m'}\cdot x) = M(p_{\hat m}\cdot x, p_{\hat m'}\cdot x)\), i.e., the same matrix \(M\) records how the more agreeable pair mixes the less agreeable
pair and how payoff evaluations are transformed.

Our next result says that the Pareto frontier under the more agreeable pair (weakly) dominates the \(M\)-transformed payoff set generated by the less agreeable pair. For subsets \(Y,Z\subseteq\mathbb R^2\), define the \textit{weak set order} \(Y\succeq_{WSO} Z\) if, for every \(z\in Z\), there exists \(y\in Y\) such that \(y\ge z\) coordinatewise.

\begin{proposition}\label{prop:pareto_frontier}
If \(C(p_{\hat m},p_{\hat m'})\subseteq C(p_m,p_{m'})\), then there exists a row-stochastic matrix \(M\) with \(M(p_{\hat m},p_{\hat m'})^\top=(p_m,p_{m'})^\top\) such that, for every decision problem \((A,u)\) and reservation payoff \(\bar u\),
\(\mathcal F(m,m';A,u)\ \succeq_{WSO}\ M\left[J(\hat m,\hat m';A,u)\right]\).
\end{proposition}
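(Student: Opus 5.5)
The plan is to combine \Cref{prop:inclusion_char} with a direct inclusion argument in payoff space, and then use compactness to reach the frontier. First, from the hypothesis \(C(p_{\hat m},p_{\hat m'})\subseteq C(p_m,p_{m'})\), I invoke \Cref{prop:inclusion_char} to obtain \(\alpha,\alpha'\in[0,1]\) with \(p_m=\alpha p_{\hat m}+(1-\alpha)p_{\hat m'}\) and \(p_{m'}=\alpha' p_{\hat m}+(1-\alpha')p_{\hat m'}\). I then set \(M\) to be the row-stochastic matrix displayed before the statement. If one wants to avoid relying on full support of \(p\), the same conclusion follows directly from duality. By Farkas' lemma, the inclusion \(C(p_{\hat m},p_{\hat m'})\subseteq\{x: p_m\cdot x\ge 0\}\) places \(p_m\) in the cone generated by \(p_{\hat m}\) and \(p_{\hat m'}\). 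Summing coordinates, all three vectors have entries summing to one, so the two nonnegative cone coefficients sum to one and are therefore convex weights; the same argument applies to \(p_{m'}\). By linearity, \((p_m\cdot x,p_{m'}\cdot x)=M(p_{\hat m}\cdot x,p_{\hat m'}\cdot x)\) for every \(x\).

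Second, I show the inclusion \(M[J(\hat m,\hat m';A,u)]\subseteq J(m,m';A,u)\). Take \(z\in M[J(\hat m,\hat m';A,u)]\). Then \(z=M(p_{\hat m}\cdot x,p_{\hat m'}\cdot x)\) for some \(x\in X(A,u)\cap C(p_{\hat m},p_{\hat m'})\). By the cone inclusion, \(x\in C(p_m,p_{m'})\), so \(x\in K(m,m';A,u)\). By the identity from the first step, \(z=(p_m\cdot x,p_{m'}\cdot x)\in J(m,m';A,u)\).

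Third, it remains to show that every \(z\in J(m,m';A,u)\) is weakly dominated by some point of \(\mathcal F(m,m';A,u)\). I expect this to be the only real obstacle, since the frontier must actually be attained. I would handle it by compactness. The set \(X(A,u)\) is the image of the compact set \(\Delta(A)^S\) under the continuous (indeed affine) map \(\sigma\mapsto x\), so it is compact. Intersecting with the closed cone \(C(p_m,p_{m'})\) keeps \(K(m,m';A,u)\) compact, and its linear image \(J(m,m';A,u)\subset\mathbb R^2\) is compact as well. Given \(z\), the set \(\{y\in J(m,m';A,u): y\ge z\}\) is nonempty, since it contains \(z\), and compact. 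Hence the continuous function \(y_1+y_2\) attains a maximum on it at some \(y^*\). Any \(y\in J(m,m';A,u)\) Pareto-dominating \(y^*\) would also satisfy \(y\ge z\) and have a strictly larger coordinate sum, which contradicts maximality. So \(y^*\in\mathcal F(m,m';A,u)\) and \(y^*\ge z\), which yields \(\mathcal F(m,m';A,u)\succeq_{WSO}M[J(\hat m,\hat m';A,u)]\). Finally, \(M\) was fixed in the first step independently of \((A,u)\) and \(\bar u\), which gives the uniformity claimed in the statement.
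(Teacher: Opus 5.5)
Your proof is correct and follows essentially the same route as the paper: obtain $M$ from the convex-hull characterization in \Cref{prop:inclusion_char}, use cone inclusion to map each $x\in K(\hat m,\hat m';A,u)$ into $K(m,m';A,u)$, and then reach the frontier by maximizing a strictly positive linear functional (you take weights $(1,1)$, the paper any $\lambda\gg0$) over the compact set $J(m,m';A,u)\cap(y'+\mathbb R^2_+)$. Your additions, namely the explicit compactness argument for $J(m,m';A,u)$ and the Farkas derivation of the mixing weights without full support, only fill in details that the paper asserts or delegates to \Cref{prop:inclusion_char}.
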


In words, more agreement, as measured via the inclusion order, forces $(p_m,p_{m'})$ to be a mixture of $(p_{\hat m},p_{\hat m'})$, and the same mixing provides a decision-problem-uniform benchmark for how the frontier moves: the new Pareto frontier coordinatewise dominates the mixed image of the old payoff set. More agreement must lead to an expansion of the appropriately constructed ``normalized'' Pareto frontier.  

Our discussion of Pareto optimal plans mirrors the discussion of admissibility for single decision makers.  In Wald's statistical decision theory \citep{wald1950statistical}, decision rules are compared by their risk vectors, admissibility is undominatedness under coordinatewise risk dominance, and the complete class theorem says (under standard regularity) that admissible rules are Bayes (or limits of Bayes), i.e., minimizers of a prior-weighted linear functional of risk. Here, for the fixed decision problem \((A,u)\) and reservation payoff \(\bar u\), each feasible protocol \(\sigma^{(A,u)}\) induces the \text{ex-ante} payoff vector \(\left(p_m\cdot x_\sigma,p_{m'}\cdot x_\sigma\right)\in J(m,m';A,u)\), so admissibility is again coordinatewise undominatedness. Since \(J(m,m';A,u)\) is convex, every admissible payoff vector is supported by some welfare weights \(\gamma\coloneqq\left(\gamma_1,\gamma_2\right)\in\mathbb{R}^2_+\setminus\left\{0\right\}\); equivalently, \(x_\sigma\) solves
\[x_\sigma\in \arg\max_{x\in X(A,u)} \gamma_1 p_m\cdot x+\gamma_2 p_{m'}\cdot x \quad\text{s.t.}\quad p_m\cdot x\ge0,\ p_{m'}\cdot x\ge0.\]
Thus, varying \(\gamma\) plays the role of varying priors in Wald: welfare-weighted maximizers form a complete class for the Pareto-dominance order on \text{ex-ante} feasible protocols.

We can again illustrate the result using the experiments from Example~\ref{ex:basic}. Consider the three-action decision problem
\begin{align*}
  a_1 &: \quad u(a_1,\omega_1)=3,\quad u(a_1,\omega_2)=-4,\\
  a_2 &: \quad u(a_2,\omega_1)=1,\quad u(a_2,\omega_2)=-1,\\
  a_3 &: \quad u(a_3,\omega)=0.
\end{align*}
Set the reservation payoff to \(\bar u=0\). The action \(a_3\) represents this benchmark, since \(\sum_{\omega\in\Omega}p(\omega)u(a_3,\omega)=0\). After \(s_2\), both bets have weakly negative value under every experiment in the example, so Pareto-undominated acceptable protocols choose \(a_3\) after \(s_2\). The only relevant choice is, therefore, at \(s_1\). 

Let
\(r_1=\mathbb{P}(a_1\mid s_1)\) and \(q_1=\mathbb{P}(a_2\mid s_1)\), with \(r_1+q_1\leq 1\). For this family of protocols, write \(G(m,r_1,q_1)\) for the \text{ex-ante} gain under model
\(m\). A direct calculation produces
\[
\begin{aligned}
G(A,r_1,q_1)   &= \frac14 r_1+\frac{3}{20}q_1,
&\qquad
G(A',r_1,q_1) &= -\frac14 r_1,\\
G(B,r_1,q_1)   &= \frac18 r_1+\frac{9}{80}q_1,
&\qquad
G(B',r_1,q_1) &= -\frac18 r_1+\frac{3}{80}q_1,\\
\text{and} \quad G(C,r_1,q_1)   &=G(C',r_1,q_1)=\frac{3}{40}q_1.
\end{aligned}
\]

For \((A,A')\), the participation constraint for \(A'\) forces \(r_1=0\). Consequently, the frontier is
\(\mathcal F(A,A';A,u)
=
\left\{
\left(\frac{3}{20},0\right)
\right\}\),
achieved by recommending \(a_2\) after \(s_1\). For \((B,B')\), the participation constraint for \(B'\) is
\[
-\frac18 r_1+\frac{3}{80}q_1\geq 0,
\qquad\text{or equivalently}\qquad
q_1\geq \frac{10}{3}r_1.
\]
Together with \(r_1+q_1\leq1\), this yields the frontier segment connecting
\[
P_1=\left(\frac{3}{26},0\right)
\qquad\text{and}\qquad
P_2=\left(\frac{9}{80},\frac{3}{80}\right).
\]
The first point is achieved at \(r_1=\frac{3}{13}\), \(q_1=\frac{10}{13}\); the second is
achieved at \(r_1=0\), \(q_1=1\). Moving along the segment shifts the protocol from more
weight on \(a_1\) to more weight on \(a_2\), reducing \(B\)'s gain while increasing \(B'\)'s
gain.

For \((C,C')\), the agents have identical evaluations. Since
\[
G(C,r_1,q_1)=G(C',r_1,q_1)=\frac{3}{40}q_1,
\]
the frontier is
\(\mathcal F(C,C';A,u)
=
\left\{
\left(\frac{3}{40},\frac{3}{40}\right)
\right\}\).

When agents interpret the same signal differently, the choice of a jointly acceptable protocol may redistribute surplus across agents because the agents evaluate the same protocol differently. As agreement increases, these valuation differences shrink in the sense captured by the mixing operator in \Cref{prop:pareto_frontier}: the new frontier need not contain the old frontier in raw payoff space, but it weakly dominates the appropriately transformed image of the old payoff set. This normalization is necessary.

The three pairs in Example~\ref{ex:basic} share a common midpoint, C.  Total
\text{ex-ante} surplus of any protocol \(x\) is
\(p_m\cdot x+p_{m'}\cdot x=(p_m+p_{m'})\cdot x=2\,p_C\cdot x\), and \(p_m+p_{m'}\)
is constant along the three pairs, the total-surplus function is identical for all three
pairs. Concretely, \(G(m,r_1,q_1)+G(m',r_1,q_1)=\tfrac{3}{20}q_1\) for
\((A,A')\), \((B,B')\), and \((C,C')\): the \(a_1\)-terms cancel across agents and the
\(a_2\)-terms coincide. In particular, the maximal joint surplus \(\tfrac{3}{20}\) is
the same for every pair and is attained by the \emph{same} protocol
\(\sigma^\ast\)---recommend \(a_2\) after \(s_1\), i.e.\ \(r_1=0,\ q_1=1\).

Thus the movement of the actual frontier as agreement increases is not driven by changes in the suprlus of a given protocol, but rather by which protocols are Pareto undominated.  The frontier collapses to a point at
both ends of the chain, but for opposite reasons. Under \((A,A')\), agent~2 holds the
uninformative experiment and so their payoff is pinned won at the reservation payoff: \(G(A')=-\tfrac14 r_1\)
forces \(r_1=0\), so agent~1 captures the entire surplus \(\tfrac{3}{20}\). Under \((C,C')\), the agents are identical, so every protocol is
valued identically and the unique optimum lies on the diagonal. Only the intermediate
pair \((B,B')\) yields a nondegenerate frontier, because agent~2 now earns strictly
positive surplus at \(\sigma^\ast\) (\(G(B')=\tfrac{3}{80}>0\)).  This can transferred: shifting weight onto \(a_1\), which agent~1 values and agent~2 opposes, transfers \text{ex-ante} payoff toward agent~1 until agent~2's participation constraint binds at \(P_1\).  
This transfer is ``costly''. Since total surplus equals \(\tfrac{3}{20}q_1\) and the resource
constraint \(r_1+q_1\le1\) trades \(q_1\) for \(r_1\), every unit of payoff moved toward
agent~1 via \(a_1\) destroys surplus by displacing the surplus-generating action \(a_2\).
Total surplus therefore falls from \(\tfrac{3}{20}\) at \(P_2\) to \(\tfrac{3}{26}\) at
\(P_1\) as one moves along the frontier. Such surplus-destroying protocols are still
Pareto-undominated  because Pareto efficiency ranks agents' individual payoffs,
not their sum.  

The raw frontier is therefore non-monotone in the inclusion order, both in ``size''
(point, then segment, then point) and in the total surplus it can support (which drops
below \(\tfrac{3}{20}\) only in the interior case). This does not conflict with
\Cref{prop:pareto_frontier}, which compares the frontier under the more agreeable pair to
the \(M\)-\emph{transformed} image of the less agreeable pair's payoff set, not to its raw
image. The invariance of \(p_m+p_{m'}\) along this chain makes the point sharply: the
efficient \emph{surplus} is fixed, and increasing agreement only contracts the menu of
Pareto-efficient \emph{splits}---non-monotonically in raw payoff space, but monotonically
once normalized by \(M\).


\subsection{Rationalizing Model}\label{sec:rationalizing}

In many settings an outside observer may see both the signal and the action jointly taken by the two decision makers, but not the internal models they use to map signals to beliefs. The agents may then need to defend their signal-contingent protocol as the outcome of a coherent common narrative. For instance, two radiologists may disagree about how to interpret X-ray scans, but may still need to justify a joint diagnostic procedure in court. This motivates the
question: when does greater agreement make it easier to rationalize jointly-acceptable behavior as Bayesian behavior under a single model?

We formalize rationalizability as compatibility with Bayesian behavior. Take a full-support prior \(p\). Given a rationalizing model \(m^r\), write \(p^r\) for the joint distribution induced by \(m^r\) and \(p\): \(p^r(\omega,s)\coloneqq p(\omega)m^r(s\mid\omega)\). For such a model \(m^r\), say that a protocol \(\sigma^\ast_{(A,u)}\colon S\to\Delta(A)\) is \textit{Bayes-optimal} under \(m^r\) if, for each signal \(s\in S\),
\[
\sigma^\ast_{(A,u)}(\cdot\mid s)
\in
\argmax_{\alpha\in\Delta(A)}
\sum_{\omega\in\Omega}p^r(\omega\mid s)U(\alpha,\omega).
\]
Equivalently, since expected utility is linear in mixed actions,
\[
\supp\left(\sigma^\ast_{(A,u)}(\cdot\mid s)\right)
\subseteq
\argmax_{a\in A}
\sum_{\omega\in\Omega}p^r(\omega\mid s)u(a,\omega).
\]
Accordingly, the protocol may randomize only among pure actions that are Bayes-optimal under the posterior induced by \(m^r\).

Because rationalization requires optimality with respect to feasible actions, we impose a mild feasibility requirement on the reservation benchmark in this subsection. Given a decision problem \((A,u)\) and prior \(p\), say that a reservation payoff \(\bar u\) is \textit{feasibly represented} if there exists an action \(a^0\in A\) such that
\[
\sum_{\omega\in\Omega}p(\omega)u(a^0,\omega)=\bar u.
\]
The representing action need not deliver \(\bar u\) in every state and need not be prior-optimal.

In turn, a model \(m^r\) \textit{rationalizes} a pair of models \((m,m')\) at prior \(p\) if its induced joint distribution \(p^r\) satisfies \(p^r(s)>0\) for every \(s\in S\) and, for every decision problem \((A,u)\) and every feasibly represented reservation payoff \(\bar u\), there exists a Bayes-optimal protocol under \(m^r\) whose induced surplus vector is jointly acceptable for \((m,m')\), i.e.,
\[
x_{\sigma^\ast}\in C(p_m,p_{m'}).
\]
We denote the set of models rationalizing \((m,m')\) at prior \(p\) by \(\mathcal{R}_p(m,m')\). A rationalizing model is, therefore, a common Bayesian narrative under which agents can always find an optimal protocol that is also an \textit{ex-ante} improvement for both agents relative to any feasible reservation benchmark. For the analogous prior-free comparison, write
\[
\mathcal{R}(m,m') \coloneqq
\bigcap_{p\text{ full support}}\mathcal{R}_p(m,m').
\]

The key observation is that agreement relaxes only the acceptability requirement, and so once the jointly acceptable cone expands, any rationalizing model that worked before continues to work.
\begin{proposition}\label{prop:incrat_fixedp}\label{cor:incrat_priorfree}
Fix a full-support prior \(p\). If
\((m,m')\succeq^{I}_p(\hat m,\hat m')\),
then \(\mathcal{R}_p(m,m')\supseteq\mathcal{R}_p(\hat m,\hat m')\).
Moreover, if
\((m,m')\succeq^{I}(\hat m,\hat m')\),
then
\(\mathcal{R}(m,m')\supseteq\mathcal{R}(\hat m,\hat m')\).
\end{proposition}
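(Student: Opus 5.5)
The argument is a direct unpacking of the definition of $\mathcal{R}_p$, using that the definition depends on $(m,m')$ only through the jointly acceptable cone. Everything else in the definition is independent of the pair being rationalized: the full-support prior $p$, the positivity requirement $p^r(s)>0$, the class of decision problems $(A,u)$ with feasibly represented reservation payoffs $\bar u$, and the notion of a Bayes-optimal protocol under $m^r$. So the plan is to fix an arbitrary $m^r\in\mathcal{R}_p(\hat m,\hat m')$ and verify each clause of the definition of $\mathcal{R}_p(m,m')$ in turn.

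\textbf{Step 1.} Take $m^r\in\mathcal{R}_p(\hat m,\hat m')$. Its induced $p^r$ satisfies $p^r(s)>0$ for all $s$, and this condition does not refer to the pair.

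\textbf{Step 2.} Fix any $(A,u)$ and any feasibly represented $\bar u$. By hypothesis there is a protocol $\sigma^\ast$ that is Bayes-optimal under $m^r$ with $x_{\sigma^\ast}\in C(p_{\hat m},p_{\hat m'})$. The assumption $(m,m')\succeq^I_p(\hat m,\hat m')$ means, by Definition~\ref{def:inclusion}, that $C(p_{\hat m},p_{\hat m'})\subseteq C(p_m,p_{m'})$. Hence $x_{\sigma^\ast}\in C(p_m,p_{m'})$. The surplus vector $x_{\sigma^\ast}$ is determined by $\sigma^\ast$, $u$ and $\bar u$ alone, not by the subjective models, so the same $\sigma^\ast$ witnesses the condition for $(m,m')$. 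Since $(A,u)$ and $\bar u$ were arbitrary, $m^r\in\mathcal{R}_p(m,m')$.

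\textbf{Step 3 (prior-free claim).} Suppose $(m,m')\succeq^I(\hat m,\hat m')$. By Definition~\ref{def:uniform_inclusion}, $(m,m')\succeq^I_p(\hat m,\hat m')$ holds for every full-support $p$; equivalently, Proposition~\ref{prop:inclusion_char} can be invoked. The first part then gives $\mathcal{R}_p(\hat m,\hat m')\subseteq\mathcal{R}_p(m,m')$ for each such $p$. Intersecting over all full-support $p$ preserves the inclusion, so $\mathcal{R}(\hat m,\hat m')\subseteq\mathcal{R}(m,m')$.

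There is no real obstacle here. The only point needing care is Step 2: one must confirm that neither the rationalization criterion nor the surplus vector depends on $(m,m')$ except through membership in the cone. In particular, $x_{\sigma^\ast}$ is defined relative to the common $\bar u$, and feasible representation of $\bar u$ involves only $p$ and $u$.
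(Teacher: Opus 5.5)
Your proposal is correct and follows essentially the same route as the paper's proof. You fix $m^r\in\mathcal{R}_p(\hat m,\hat m')$, reuse the same Bayes-optimal protocol, whose surplus vector lies in $C(p_{\hat m},p_{\hat m'})\subseteq C(p_m,p_{m'})$, and get the prior-free claim by applying the fixed-prior result at every full-support prior and intersecting. Your explicit check that the positivity requirement $p^r(s)>0$ and the surplus vector do not depend on the pair being rationalized is a small point the paper leaves implicit.
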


\Cref{prop:incrat_fixedp} provides a distinct interpretation of the inclusion order.\footnote{We provide a direct proof of this result, but it can also be shown that this is an implication of the elegant characterization of \cite{morris1997rationality}.}  Greater agreement expands the cone of jointly acceptable surplus vectors. Since rationalization requires a Bayes-optimal protocol to lie in that cone, greater agreement enlarges the set of (single-model) Bayesian narratives that can rationalize jointly acceptable behavior.

We again illustrate the mechanism using the experiments from Example~\ref{ex:basic} and the decision problem from Section~\ref{sec:pareto}. Focus on protocols that choose \(a_3\) after
\(s_2\), and write \(r_1=\mathbb{P}(a_1\mid s_1)\) and \(q_1=\mathbb{P}(a_2\mid s_1)\). Let \(G(m,r_1,q_1)\) denote the \text{ex-ante} gain under model \(m\), as in the Pareto-frontier
example. Since \(p_C=\frac12 p_A+\frac12 p_{A'}\), we have, for every such protocol,
\[
G(C,r_1,q_1)
=
\frac12 G(A,r_1,q_1)+\frac12 G(A',r_1,q_1).
\]
Thus, every protocol in this family that is jointly acceptable for \((A,A')\) is also acceptable
for \((C,C')\).

The inclusion can be strict. To see this, let $\Omega=\{\omega_1,\omega_2\}$,
$S=\{s_1,s_2\}$, and let the prior be uniform. Consider
\[
A=
\begin{pmatrix}
0.7&0.3\\
0.4&0.6
\end{pmatrix},
\qquad
A'=
\begin{pmatrix}
0.5&0.5\\
0.5&0.5
\end{pmatrix},
\]
and define
\[
C=C'=\frac12 A+\frac12 A'
=
\begin{pmatrix}
0.6&0.4\\
0.45&0.55
\end{pmatrix}.
\]
Then
\[
C\in\mathcal R_p(C,C')
\qquad\text{but}\qquad
C\notin\mathcal R_p(A,A').
\]

First, \(C\in\mathcal R_p(C,C')\). For any decision problem \((A,u)\) and reservation payoff \(\bar u\) represented by a feasible action \(a^0\), choose after each signal a \(C\)-posterior-optimal action. Because \(a^0\) is feasible after each signal, the chosen posterior-optimal action weakly beats \(a^0\) signal by signal under \(C\). Therefore,
\[
\sum_{s\in S} p_C(s)\max_{\alpha\in\Delta(A)}
\sum_{\omega\in\Omega}p_C(\omega\mid s)U(\alpha,\omega)
\ge
\sum_s p_C(s)\sum_{\omega\in\Omega}p_C(\omega\mid s)u(a^0,\omega)
=
\sum_{\omega\in\Omega}p(\omega)u(a^0,\omega)
=
\bar u.
\]
Since \(C=C'\), both participation constraints are satisfied. It remains to show that $C\notin\mathcal R_p(A,A')$. Consider a decision problem with two actions and reservation payoff \(\bar u=0\). Let \(a_0\) represent the reservation payoff, with zero payoff in both states, while action \(a_1\) yields
\[
u(a_1,\omega_1)=1,
\qquad
u(a_1,\omega_2)=-\frac65 .
\]
The prior expected payoff of \(a_1\) is
\[
\frac12-\frac12\cdot\frac65=-\frac1{10}<0.
\]

Under $C$, after signal $s_1$,
\[
p_C(\omega_1\mid s_1)
=
\frac{0.6}{0.6+0.45}
=
\frac47,
\]
and the posterior payoff of $a_1$ is
\[
\frac47-\frac65\cdot\frac37
=
\frac2{35}>0.
\]
After signal $s_2$,
\[
p_C(\omega_1\mid s_2)
=
\frac{0.4}{0.4+0.55}
=
\frac8{19},
\]
and the posterior payoff of $a_1$ is
\[
\frac8{19}-\frac65\cdot\frac{11}{19}
=
\frac{40-66}{95}<0.
\]
Thus the unique $C$-Bayes-optimal protocol chooses $a_1$ after $s_1$ and $a_0$
after $s_2$.

Under $A'$, signals are uninformative and $p_{A'}(s_1)=1/2$. Conditional on
$s_1$, the expected payoff of $a_1$ remains $-1/10$. Hence the \textit{ex-ante} surplus, measured relative to \(\bar u=0\), of the unique \(C\)-Bayes-optimal protocol under \(A'\) is
\[
\frac12\left(-\frac1{10}\right)
=
-\frac1{20}<0.
\]
Therefore this $C$-Bayes-optimal protocol violates $A'$'s participation
constraint, so $C\notin\mathcal R_p(A,A')$. Hence the inclusion is strict.

The takeaway of this application is that agreement expands the set of externally defensible common narratives. A rationalizing model must do two things at once: it must make the observed protocol
Bayes-optimal, and the protocol it rationalizes must be jointly acceptable to the disagreeing agents. Greater agreement relaxes the second requirement by enlarging the cone of jointly acceptable surplus vectors. Thus, the inclusion order has not only a feasibility interpretation, but also an external-rationalizability interpretation: more agreement makes joint behavior easier to defend as if it were generated by a common model.

\section{Completing the Inclusion Preorder: Cosine Similarity}\label{sec:cosine}

The previous section highlighted the economic content of the inclusion preorder. Its decision-theoretic force, however, comes at the cost of incompleteness. At a fixed full-support prior \(p\), one pair ranks above another exactly when it makes a weakly larger
set of surplus vectors jointly acceptable:
\[
(m,m')\succeq^I_p(\hat m,\hat m')
\quad\Longleftrightarrow\quad
C(p_{\hat m},p_{\hat m'})\subseteq C(p_m,p_{m'}).
\]
When these cones are not nested, the inclusion preorder is silent. This section asks how to complete the comparison in a disciplined way.

Write \(d\coloneqq|\Omega||S|\) and identify \(\mathbb{R}^{\Omega\times S}\) with \(\mathbb{R}^d\). We impose a neutrality requirement on this Arrow-Debreu surplus space: because the comparison ranges over all surplus vectors, it should not depend on the orthonormal coordinates used to represent that space. Orthogonal transformations preserve lengths, angles, and inner products among surplus vectors, and should be interpreted as changes of basis rather than as literal relabelings of states and signals.

This neutrality desideratum leads naturally to an angle-based comparison. The cone \(C(p_m,p_{m'})\) is homogeneous in surplus vectors and unchanged by positive rescalings of either normal vector:
\(C(\alpha p_m,\beta p_{m'})=C(p_m,p_{m'})\) for all \(\alpha,\beta>0\). \textit{Viz.}, the cone depends on \(p_m\) and \(p_{m'}\) only through their directions. The natural rotation-invariant object is, therefore, the angle between these directions.

For nonzero \(a,b\in\mathbb{R}^d\), define
\[
\cos(a,b)\coloneqq\frac{a\cdot b}{||a|| \ ||b||}.
\]
In our application, \(p_m\) and \(p_{m'}\) lie in the positive orthant, so
\(\cos(p_m,p_{m'})\in[0,1]\). Let \(\mathcal{P}_p\subseteq\mathbb{R}^d\) denote the set of
joint-distribution vectors \(p_m\) induced by experiments \(m\in\mathcal{M}\) under prior
\(p\).

\begin{defn}\label{def:ric}
Fix a full-support prior \(p\). A complete preorder \(\succeq\) on \(\mathcal{P}_p^2\) is a rotation-invariant strict completion of the inclusion preorder if it satisfies:
\begin{enumerate}
\item \textit{Extension of inclusion}: if \(C\left(p_{\hat m},p_{\hat m'}\right)\subseteq C\left(p_m,p_{m'}\right)\), then \((p_m,p_{m'})\succeq(p_{\hat m},p_{\hat m'})\).
\item \textit{Rotation invariance}: if there exists \(Q\in O(d)\) such that \(C\left(p_{\hat m},p_{\hat m'}\right)=Q\left[C\left(p_m,p_{m'}\right)\right]\), then \((p_m,p_{m'})\sim(p_{\hat m},p_{\hat m'})\).
\item \textit{Strictness}: if there exists \(Q\in O(d)\) such that \(Q\left[C\left(p_{\hat m},p_{\hat m'}\right)\right]\subsetneq C\left(p_m,p_{m'}\right)\), then \((p_m,p_{m'})\succ(p_{\hat m},p_{\hat m'})\).
\end{enumerate}
\end{defn}

Our next proposition shows that the requirements of \Cref{def:ric} pin down a unique complete extension of the inclusion order. In particular, once a complete preorder extends cone inclusion, treats rotated cones as equivalent, and ranks strict rotated inclusions strictly, it must rank pairs by cosine similarity.

\begin{proposition}\label{prop:cosine}
Fix a full-support prior \(p\). There exists a unique rotation-invariant strict completion of the inclusion preorder, denoted \(\succeq_p^{RIC}\). Moreover, for all \((p_m,p_{m'}),(p_{\hat m},p_{\hat m'})\in\mathcal{P}_p^2\),
\[
(p_m,p_{m'})\succeq_p^{RIC}(p_{\hat m},p_{\hat m'})
\Longleftrightarrow
\cos\left(p_m,p_{m'}\right)\geq\cos\left(p_{\hat m},p_{\hat m'}\right).
\]
\end{proposition}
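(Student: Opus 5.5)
The plan is to reduce everything to the angle $\theta(a,b)\in[0,\pi/2]$ between the two normals. I will show that the geometry of $C(a,b)$ up to orthogonal transformations is determined by this angle. Then I will check that ranking by cosine satisfies the three axioms, and finally that any complete preorder satisfying them must coincide with the cosine ranking.

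\emph{Step 1 (congruence classes).} For nonzero $a,b$ in the positive orthant, $C(a,b)$ depends only on the unit directions $\hat a=a/\|a\|$ and $\hat b=b/\|b\|$. Given two pairs with equal cosine, there is a $Q\in O(d)$ with $Q\hat a=\hat a'$ and $Q\hat b=\hat b'$, since orthogonal maps act transitively on pairs of unit vectors with a prescribed inner product. Because $Q^{-\top}=Q$, this gives $Q[C(a,b)]=\{y\colon \hat a'\cdot y\ge0,\ \hat b'\cdot y\ge0\}=C(a',b')$.

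Conversely, suppose $Q[C(a,b)]\subseteq C(a',b')$. Dualizing, as in the proof of \Cref{prop:inclusion_char}, gives $\mathrm{cone}\{\hat a',\hat b'\}\subseteq \mathrm{cone}\{Q\hat a,Q\hat b\}$. The angle of a subcone of a two-dimensional pointed cone is at most the angle of the cone, so $\cos(a',b')\ge\cos(a,b)$. If the inclusion is strict, the dual inclusion is strict as well. A strictly smaller subcone of the planar wedge spanned by $Q\hat a,Q\hat b$, or a wedge of equal angle, must be the same wedge, so strictness forces $\cos(a',b')>\cos(a,b)$. I also need to handle the degenerate case $a\parallel b$, where the cone is a half-space, separately.

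\emph{Step 2 (existence).} Define $\succeq$ by comparing cosines. This is a complete preorder. Extension of inclusion follows from Step 1 with $Q=I$ (equivalently from \Cref{prop:inclusion_char}: the segment between two mixtures has a smaller angle). Rotation invariance follows because $Q[C(a,b)]=C(a',b')$ implies equality of the dual cones, hence equal angles. Strictness is the strict part of Step 1.

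\emph{Step 3 (uniqueness).} Let $\succeq$ be any rotation-invariant strict completion. I will show that $(a,b)\succeq(a',b')$ if and only if $\cos(a,b)\ge\cos(a',b')$. If the cosines are equal, Step 1 produces $Q$ with $C(a',b')=Q[C(a,b)]$, so rotation invariance gives indifference. If $\cos(a,b)>\cos(a',b')$, I construct $Q$ with $Q[C(a',b')]\subsetneq C(a,b)$, and strictness then gives strict preference. For this I place the wedge $\mathrm{cone}\{\hat a,\hat b\}$ strictly inside a wedge of the larger angle, in the same 2-plane and sharing its bisector. I then use transitivity of $O(d)$ on pairs with fixed inner product to map $(\hat a',\hat b')$ onto the endpoints of that larger wedge. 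Dual-cone reversal turns this into strict inclusion of the acceptable cones. Completeness then pins down $\succeq$ entirely.

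The main obstacle is the realizability in Step 3. The larger wedge's generators must correspond to rotated images of the actual $p_{\hat m},p_{\hat m'}$, which is fine since the axioms allow arbitrary $Q\in O(d)$ and do not require the rotated normals to lie in $\mathcal P_p$. The strictness of the dual inclusion also has to be transferred carefully, including the edge case where one pair is collinear.
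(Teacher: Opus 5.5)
Your proposal is correct and takes essentially the same route as the paper. The paper compresses your Steps 1 and 3 into a single geometric fact: dualize to $\operatorname{cone}(a,b)$, compare the apertures of the planar wedges, and use that $O(d)$ acts transitively on pairs of unit vectors with a given inner product. It then derives existence and uniqueness from that fact exactly as you do, and your bisector construction for the strict case and your treatment of the collinear case spell out details the paper asserts in one line.
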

We, henceforth, refer to \(\succeq_p^{RIC}\) as the (fixed-prior) cosine order.

The cosine order has a direct interpretation in terms of how the two models value the same surplus vector. Model \(m\) assigns \(x\) the \textit{ex-ante} value \(p_m\cdot x\), while model \(m'\) assigns it \(p_{m'}\cdot x\). Since positive rescalings of \(p_m\) and \(p_{m'}\) leave \(C(p_m,p_{m'})\) unchanged, the relevant evaluations are normalized. Let
\(\tilde p_m\coloneqq\frac{p_m}{||p_m||}\) and \(\tilde p_{m'}\coloneqq\frac{p_{m'}}{||p_{m'}||}\).
Then
\[
\sup_{||x||\leq1}
\left|
\tilde p_m\cdot x-\tilde p_{m'}\cdot x
\right|
=
||\tilde p_m-\tilde p_{m'}||,
\]
and so
\[
\cos(p_m,p_{m'})
=
1-\frac{1}{2}
\left[
\sup_{||x||\leq1}
\left|
\tilde p_m\cdot x-\tilde p_{m'}\cdot x
\right|
\right]^2.
\]
We see that cosine similarity is one minus one half of the squared largest possible difference between the agents' normalized \textit{ex-ante} evaluations over the unit surplus ball. High cosine means that the two normalized evaluations remain close for every normalized surplus direction.

This also connects cosine to speculative trade. In Section~\ref{sec:spec}, the radius of the
transfer interval \(\mathcal{T}_p(x;m,m')\) was
\(\frac12 |p_m\cdot x-p_{m'}\cdot x|\). After normalizing the two valuation functionals, the largest radius over the unit surplus ball is
\[
\sup_{||x||\leq1}
\frac12
\left|
\tilde p_m\cdot x-\tilde p_{m'}\cdot x
\right|
=
\frac12 ||\tilde p_m-\tilde p_{m'}||
=
\sqrt{\frac{1-\cos(p_m,p_{m'})}{2}}.
\]
Therefore, at a fixed prior, ranking pairs by cosine similarity is equivalent to ranking them by the maximal normalized radius of their speculative-transfer intervals, with more agreeable pairs having smaller maximal radii. Inclusion requires pointwise nesting of raw transfer intervals for every surplus vector, whereas cosine compares the worst-case normalized radius.

The geometric intuition behind \Cref{prop:cosine} is simple. Take a pair \((p_m,p_{m'})\) and consider the plane \(\Pi\coloneqq\operatorname{span}\{p_m,p_{m'}\}\subseteq\mathbb{R}^d\).
Only this at-most-two-dimensional subspace matters for \textit{ex-ante} joint acceptability. Furthermore, any surplus vector \(x\) can be decomposed as \(x=x_\Pi+x_\perp\), with
\(x_\Pi\in\Pi\) and \(x_\perp\in\Pi^\perp\). Since
\(p_m\cdot x_\perp=p_{m'}\cdot x_\perp=0\), both participation constraints depend only on \(x_\Pi\).

Thus, after ignoring the null subspace \(\Pi^\perp\), every pair generates a two-dimensional wedge. The opening of this wedge is determined by the angle between the two normal vectors \(p_m\) and \(p_{m'}\): the smaller the angle, the larger the cone of jointly acceptable surplus directions. Rotation invariance lets us align the relevant two-dimensional planes and compare only wedge openings. Since smaller angles are exactly larger cosine similarities, the unique rotation-invariant strict completion ranks pairs by \(\cos(p_m,p_{m'})\). If the two belief vectors are collinear, the wedge degenerates to a single half-space, the limiting
case of perfect agreement.

The unit sphere provides a useful secondary interpretation. Let
\[
\mathbb{S}^{d-1}\coloneqq\{z\in\mathbb{R}^d:||z||=1\},
\]
and let \(\mu\) denote the uniform probability measure on \(\mathbb{S}^{d-1}\).\footnote{Although such a uniform distribution over surpluses is natural, we want to emphasize that this does not necessarily correspond to a uniform probability measure over decision problems, which is why we begin our motivation of the cosine completion through Definition \label{def:ric}.} For a pair
\((m,m')\), denote the share of normalized surplus directions that satisfy both agents' \textit{ex-ante} participation constraints
\[
\pi_p(m,m')
\coloneqq
\mu\left(C(p_m,p_{m'})\cap\mathbb{S}^{d-1}\right).\]
For \(d\geq2\), writing
\(\theta=\arccos(\cos(p_m,p_{m'}))\),
\[
\pi_p(m,m')
=
\frac{\pi-\theta}{2\pi}
=
\frac{1}{4}
+
\frac{1}{2\pi}
\arcsin\left(\cos(p_m,p_{m'})\right),
\]
so \(\pi_p(m,m')\) is strictly increasing in \(\cos(p_m,p_{m'})\).

\begin{proposition}[Uniform-sphere interpretation]\label{prop:urs_cosine}
Fix a full-support prior \(p\). Then,
\[
\pi_p(m,m')\geq\pi_p(\hat m,\hat m')
\Longleftrightarrow
\cos(p_m,p_{m'})\geq\cos(p_{\hat m},p_{\hat m'}).
\]
\end{proposition}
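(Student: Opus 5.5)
The plan is to compute \(\pi_p(m,m')\) explicitly as a function of the angle between the two normals, and then read off monotonicity. Write \(a=p_m\), \(b=p_{m'}\), both nonzero with nonnegative entries, and let \(\theta=\arccos(\cos(a,b))\in[0,\pi/2]\); the upper bound holds because \(a\cdot b\ge0\).

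\emph{Step 1: reduce to the plane.} Let \(\Pi=\operatorname{span}\{a,b\}\), as in the geometric discussion following \Cref{prop:cosine}. If \(\Pi\) is one-dimensional (\(\theta=0\)), then \(C(a,b)\) is the closed half-space \(\{x: a\cdot x\ge0\}\). By the antipodal symmetry of \(\mu\) its measure is \(1/2=(\pi-0)/(2\pi)\).

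If \(\Pi\) is two-dimensional, I use that the uniform measure on \(\mathbb S^{d-1}\) is the law of \(Z/\|Z\|\) for a standard Gaussian \(Z\) on \(\mathbb R^d\). Membership of \(Z/\|Z\|\) in the cone depends only on the signs of \(a\cdot Z\) and \(b\cdot Z\), since the cone is invariant under positive scaling. These signs depend only on the orthogonal projection \(Z_\Pi\), which is a standard Gaussian on \(\Pi\cong\mathbb R^2\). Its direction is therefore uniform on the circle, and \(\pi_p(m,m')\) equals the circular measure of the arcs \(\{z\in\mathbb S^1:a\cdot z\ge0,\ b\cdot z\ge0\}\).

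\emph{Step 2: the planar computation.} This step is an elementary angle count. Each constraint \(a\cdot z\ge0\) selects a closed semicircle centered at the direction of \(a\). Two semicircles whose centers are separated by angle \(\theta\) intersect in an arc of length \(\pi-\theta\). Hence \(\pi_p(m,m')=(\pi-\theta)/(2\pi)\), and boundary points have measure zero, so closedness is irrelevant. Substituting \(\theta=\pi/2-\arcsin(\cos(a,b))\) gives the alternative expression \(\tfrac14+\tfrac1{2\pi}\arcsin(\cos(a,b))\).

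\emph{Step 3: conclude.} The map \(c\mapsto \tfrac14+\tfrac1{2\pi}\arcsin c\) is strictly increasing on \([0,1]\). Therefore \(\pi_p(m,m')\ge\pi_p(\hat m,\hat m')\) holds if and only if \(\cos(p_m,p_{m'})\ge\cos(p_{\hat m},p_{\hat m'})\), which is the claimed equivalence.

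The only step needing care is Step 1, the claim that projecting the uniform sphere measure onto a two-dimensional subspace and normalizing yields the uniform measure on the circle. I would justify it via the rotation invariance and independence of the Gaussian coordinates, choosing an orthonormal basis whose first two vectors span \(\Pi\). The case \(d=2\) is immediate, and \(d\ge2\) is guaranteed by the paper's standing assumption.
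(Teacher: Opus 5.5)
Your proof is correct and follows essentially the same route as the paper's: reduce to the plane spanned by \(p_m\) and \(p_{m'}\), note that the direction of the projection onto that plane is uniform on the circle, count the arc of length \(\pi-\theta\), and conclude by monotonicity of \(\arcsin\). The only difference is how you justify uniformity of the projected angle: you use the Gaussian representation \(Z/\|Z\|\) and treat \(\theta=0\) separately, whereas the paper appeals directly to invariance of \(\mu\) under rotations within the plane.
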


\Cref{prop:urs_cosine} states a probabilistic interpretation of the cosine order. A more agreeable pair is one for which a larger share of directions is jointly acceptable.

Recall that in Example \ref{ex:basic} the pair $D,D'$ was not comparable to any of the other three pairs in the set-inclusion order.  However, they are comparable using the cosine order.  In particular the cosine similarities for $(A,A'), (B,B'), (C,C'), (D,D')$ are respectively $0.9535$, $0.9882$, $1.000$ and $0.9872$, meaning it ranks between $A,A'$ and $B,B'$ (notice also that for the three that are rankable by set-inclusion, the cosine order replicates these).  

The cosine completion is a fixed-prior comparison. One might want, in a way parallel to the inclusion order, to define an extension of the cosine order that is independent of the prior, requiring the cosine comparison to hold for every full-support prior:
\[
(m,m')\succeq^{CS}(\hat m,\hat m')
\Longleftrightarrow
\cos\left(p_m,p_{m'}\right)\geq\cos\left(p_{\hat m},p_{\hat m'}\right)\text{ for every full-support prior }p.
\]
Uniform inclusion implies the corresponding prior-uniform cosine comparison. Indeed, if \((m,m')\succeq^I(\hat m,\hat m')\), then \(C\left(p_{\hat m},p_{\hat m'}\right)\subseteq C\left(p_m,p_{m'}\right)\) for every full-support prior \(p\), so \(\cos\left(p_m,p_{m'}\right)\geq\cos\left(p_{\hat m},p_{\hat m'}\right)\) for every such \(p\). The converse fails. Moreover, unlike the fixed-prior cosine order, the prior-uniform cosine order is generally incomplete.
\begin{proposition}\label{prop:cosinprior}
The prior-uniform cosine order \(\succeq^{CS}\) is incomplete, and \(\succeq^{CS}\neq\succeq^I\).
\end{proposition}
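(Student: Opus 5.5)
The plan is to prove both claims with explicit counterexamples in the $2\times 2$ environment of Example~\ref{ex:basic} ($\Omega=\{\omega_1,\omega_2\}$, $S=\{s_1,s_2\}$). The priors are the full-support $p=(\pi,1-\pi)$ with $\pi\in(0,1)$. Throughout I use the implication recorded just before the statement: $\succeq^I\subseteq\succeq^{CS}$.

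\emph{Step 1: $\succeq^{CS}\neq\succeq^I$.} Since $\succeq^I\subseteq\succeq^{CS}$, it suffices to find two pairs that are ranked by $\succeq^{CS}$ but not by $\succeq^I$.
\begin{itemize}
\item Take any pair $(\hat m,\hat m')$, for instance $(A,A')$ from Example~\ref{ex:basic}.
\item Take any $m$ not lying on the segment $\conv\{\hat m,\hat m'\}$, for instance $m=D$.
\item For every full-support $p$ we have $\cos(p_m,p_m)=1\ge\cos(p_{\hat m},p_{\hat m'})$, so $(m,m)\succeq^{CS}(\hat m,\hat m')$.
\item By item~\eqref{prop1item3} of \Cref{prop:inclusion_char}, $(m,m)\succeq^I(\hat m,\hat m')$ would require $m\in\conv\{\hat m,\hat m'\}$. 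This fails; Example~\ref{ex:basic} already shows $D\notin\conv\{A,A'\}$.
\end{itemize}
Hence the two relations differ.

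\emph{Step 2: incompleteness.} I will exhibit two pairs whose cosine ranking reverses as the prior moves, so that neither pair dominates the other uniformly. The idea is to make the disagreement in each pair live in a different state's row. The prior weight on that row then governs how large the disagreement looks in $\mathbb R^{\Omega\times S}$.

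Pair 1 disagrees only on the $\omega_1$ row:
\[
m_1=\begin{pmatrix}1&0\\ \tfrac12&\tfrac12\end{pmatrix},\qquad
m_1'=\begin{pmatrix}0&1\\ \tfrac12&\tfrac12\end{pmatrix}.
\]
Pair 2, $(m_2,m_2')$, is the mirror image: the $\omega_2$ row is $(1,0)$ versus $(0,1)$, and the $\omega_1$ row is $(\tfrac12,\tfrac12)$ for both. With $p_m(\omega,s)=p(\omega)m(s\mid\omega)$, a direct computation gives
\[
\cos(p_{m_1},p_{m_1'})=\frac{(1-\pi)^2/2}{\pi^2+(1-\pi)^2/2},
\qquad
\cos(p_{m_2},p_{m_2'})=\frac{\pi^2/2}{(1-\pi)^2+\pi^2/2}.
\]
\begin{itemize}
\item At $\pi=0.9$, the first cosine is near $0$ and the second is near $1$.
\item At $\pi=0.1$, the inequality reverses.
\end{itemize}
Therefore neither $(m_1,m_1')\succeq^{CS}(m_2,m_2')$ nor the converse holds, which proves incompleteness.

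\emph{Expected obstacle.} There is no real obstacle here. The only care needed is checking the two cosine formulas, and confirming that at two explicit priors (say $\pi=0.9$ and $\pi=0.1$) the inequalities are strict rather than only in the limit. A one-line numerical check settles this: at $\pi=0.9$ the values are $0.005/0.815\approx0.006$ and $0.405/0.415\approx0.976$.
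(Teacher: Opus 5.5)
Your proof is correct and takes essentially the paper's approach. Incompleteness comes from two mirrored pairs whose disagreement is confined to one state's row, so the prior weight on that row reverses the cosine ranking; the paper uses rows $(9/10,1/10)$ versus $(1/10,9/10)$ at priors $t=1/4,3/4$, and your degenerate rows at $\pi=0.1,0.9$ work just as well. For $\succeq^{CS}\neq\succeq^I$, both arguments compare an identical pair, which has cosine $1$ at every prior, with a pair whose segment does not contain that experiment (the paper uses $(\bar r,\bar r)$ versus $(r,r')$, you use $(D,D)$ versus $(A,A')$ from Example~\ref{ex:basic}).
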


Thus, cosine similarity completes the inclusion order only at a fixed prior. Once cosine comparisons are required to hold uniformly over priors, the order again becomes incomplete and, furthermore, it no longer coincides with the prior-free inclusion order.

\section{Discussion}\label{sec:disc}

In this section, we discuss three features of our (dis)agreement orders. First, we compare the inclusion order to the Blackwell order; second, we relate our cosine similarity order to canonical notions of divergences between probability distributions; third, we discuss an ex-interim formulation of our orders and relate it to the \text{ex-ante} formulation.

\subsection{Blackwell Dominance and Agreement}\label{sec:blackwell}

The best-known partial order on experiments is \emph{Blackwell dominance} \((\succeq^B)\). If \(m\succeq^B m'\), then for every decision problem and every prior, an agent can achieve weakly higher expected utility after observing \(m\) than after observing \(m'\). Like our inclusion preorder, Blackwell dominance is generally incomplete.  Of course, Blackwell dominance compares two experiments, while our ordering compares two pairs of experiments.  However, one can easily ``extend'' Blackwell, and ask what happens if one pair $m,m'$ is more Blackwell extreme than $\hat{m}, \hat{m}'$ --- in other words, all four are comparable via the Blackwell order, with the latter being nested inside the former in that order. 

Blackwell dominance and our agreement orders capture different features of experiments. Blackwell's asks whether one experiment is more informative for matching actions to states. Our order asks whether two agents who disagree about signal likelihoods evaluate
signal-contingent surplus plans in similar directions. Crucially, Blackwell dominance need not track agreement in our sense: a pair of experiments can be farther apart in the Blackwell order while inducing more aligned \text{ex-ante} participation constraints.

The distinction is transparent in the binary state and binary signal case. Let \(\Omega=\{\omega_1,\omega_2\}\) and \(S=\{s_1,s_2\}\), and represent an experiment by its pair of signal-\(s_1\) likelihoods
\[
(a,b)\equiv
\left(m(s_1\mid\omega_2), m(s_1\mid\omega_1)\right)\in[0,1]^2.
\]
Let
\[
G=
\begin{pmatrix}
q_1 & 1-q_1\\
q_0 & 1-q_0
\end{pmatrix}
\]
be a binary garbling, with rows and columns indexed by \((s_1,s_2)\). If \(m'=mG\), then, writing \(v\equiv q_0\) and \(\lambda\equiv q_1-q_0\),
\[
(a',b')=
\left(q_0+(q_1-q_0)a, q_0+(q_1-q_0)b\right) = (v+\lambda a, v+\lambda b).
\]

The parameter \(\lambda\) scales the state contrast, \(b'-a'=\lambda(b-a)\). Assuming that the garbling preserves the meaning of signals (i.e., $q_1\geq q_0$) $\lambda\in [0,1]$ measures attenuation of the contrast, which is the usual loss of informativeness captured by Blackwell's order.\footnote{In general, $\norm{\lambda}$ captures attenuation while the sign of $\lambda$ captures signal-label reversal; e.g., a deterministic swap has $\lambda=-1$ and is information-preserving up to relabeling.}

The parameter \(v\), however, shifts both coordinates simultaneously. A garbling can, thus, reduce informativeness while also changing the direction in which the experiment moves. Our agreement order is sensitive to this second component: it agrees with Blackwell spread only when garblings weaken a fixed interpretation of the signal, so that the relevant experiments
move along a single affine slice.

We can see this by extending Example~\ref{ex:basic} by adding
two experiments \(E\) and \(E'\):
\[
A=
\begin{pmatrix}
0.7&0.3\\
0.4&0.6
\end{pmatrix},
\qquad
E=
\begin{pmatrix}
0.7&0.3\\
0.5&0.5
\end{pmatrix},
\]
\[
E'=
\begin{pmatrix}
0.4&0.6\\
0.3&0.7
\end{pmatrix},
\qquad \text{and} \qquad
A'=
\begin{pmatrix}
0.5&0.5\\
0.5&0.5
\end{pmatrix}.
\]
Thus, \(A\succ^B E\succ^B E'\succ^B A'\).

The inclusion order produces a different comparison. The convex-hull characterization in \Cref{prop:inclusion_char} tells us that \((E,E')\succeq^I(A,A')\) if and only if \(E,E'\in\conv\{A,A'\}\), and \((A,A')\succeq^I(E,E')\) if and only if \(A,A'\in\conv\{E,E'\}\). Neither holds.\footnote{For example, in the \((a,b)\)-coordinates, \(E\in\conv\{A,A'\}\) requires
\(0.7=0.7\alpha+0.5(1-\alpha)\), so \(\alpha=1\), but then the \(a\)-coordinate would be \(0.4\), not \(0.5\). Conversely,
\(A\in\conv\{E,E'\}\) requires
\(0.7=0.7\alpha+0.4(1-\alpha)\), so \(\alpha=1\), but then the \(a\)-coordinate would be \(0.5\), not \(0.4\).}

Figure~\ref{fig:blackwell-square-example} reveals the geometry. The Blackwell chain \(A\to E\to E'\to A'\) ``bends'' away from the fixed-interpretation segment \(\conv\{A,A'\}\). The intermediate experiments \(E\) and \(E'\) are Blackwell-between the endpoints of the outer pair, but they are not convex-between them.

The cosine order also distinguishes agreement from Blackwell spread. At the uniform prior,
\[
\cos(p_A,p_{A'})=\sqrt{\frac{10}{11}}\approx0.953,
\qquad \text{and} \qquad
\cos(p_E,p_{E'})=\sqrt{\frac{128}{165}}\approx0.881.
\]
So, despite \((A,A')\) having a wider Blackwell spread, it is more aligned in the cosine order than \((E,E')\). Crucially, garblings leading from \(A\) to \(E\) to \(E'\) do not merely add noise; they also shift the signal distributions, changing the direction of
the induced joint distributions in \(\mathbb R^4\).

The lesson is that Blackwell spread and agreement capture different features of experiments. Blackwell dominance is about garblings: whether one experiment can be obtained from another by adding noise. The inclusion order is about convexification: whether each experiment in one pair is a mixture of the experiments in another pair. These notions agree when Blackwell movement weakens a fixed interpretation of the signal. They come apart when the garbling path bends.

Although Blackwell spread and the inclusion order are independent, they are not completely unrelated. A wider Blackwell pair cannot dominate a narrower one in the inclusion order.

\begin{proposition}\label{prop:blackwell_inclusion_constraint}
If \(Z\succ^B W\succ^B W'\succ^B Z'\) strictly, then
\((Z,Z')\not\succeq^I(W,W')\).
\end{proposition}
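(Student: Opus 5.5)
The plan is a contradiction argument that uses only the first member of the wider pair. Suppose \((Z,Z')\succeq^I(W,W')\). By \Cref{prop:inclusion_char} (the equivalence of items \ref{prop1item2} and \ref{prop1item3}), there is \(\alpha\in[0,1]\) with \(Z=\alpha W+(1-\alpha)W'\) as Markov kernels \(\Omega\to\Delta(S)\). I will show that any such mixture is Blackwell dominated by \(W\). That contradicts the strict ranking \(Z\succ^B W\).

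The key step is the convexity of the value of information in the experiment. Fix any full-support prior \(p\) and any finite decision problem \((A,u)\). Define
\[
V(m)\coloneqq\sum_{s\in S}\max_{a\in A}\sum_{\omega\in\Omega}p(\omega)m(s\mid\omega)u(a,\omega).
\]
For each \(s\), the inner expression is linear in \(m\), so the maximum over \(a\) is convex in \(m\), and the sum over \(s\) is convex as well. Hence
\[
V(Z)\le\alpha V(W)+(1-\alpha)V(W')\le\max\{V(W),V(W')\}=V(W),
\]
where the last equality uses \(W\succeq^B W'\) (indeed \(W\succ^B W'\)): by Blackwell's theorem, \(W\) yields weakly higher value than \(W'\) in every decision problem at every prior.

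Since \(V(Z)\le V(W)\) holds for every decision problem and every prior, the converse direction of Blackwell's theorem (finite \(\Omega\), finite \(S\)) gives \(W\succeq^B Z\). Together with \(Z\succeq^B W\), this makes \(Z\) and \(W\) Blackwell equivalent, which contradicts the strict relation \(Z\succ^B W\). Therefore \(Z\notin\conv\{W,W'\}\), and by \Cref{prop:inclusion_char} we conclude \((Z,Z')\not\succeq^I(W,W')\).

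A symmetric argument applies to \(Z'\): a mixture of \(W\) and \(W'\) dominates \(W'\), because garbling-type concavity bounds values from below only when paired with the ordering. That argument is not needed, however, since failure for \(Z\) alone suffices.

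The main obstacle is invoking Blackwell's theorem in the correct direction. I need to make sure that "\(V(Z)\le V(W)\) for all problems at all priors" yields \(W\succeq^B Z\), so that strictness is really contradicted. With the paper's definition of \(\succ^B\) (weakly higher value everywhere, and not conversely), this contradiction is immediate. If strictness were instead defined via garblings, I would cite the equivalence theorem to translate between the two definitions.
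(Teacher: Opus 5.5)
Your argument is correct, but it uses a different characterization of Blackwell dominance from the paper's.

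The paper works directly with garblings. From $W' = WG$ it writes $Z = \alpha W + (1-\alpha)WG = W[\alpha I + (1-\alpha)G]$. A convex combination of stochastic matrices is stochastic, so $Z$ is an explicit garbling of $W$.

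You instead work on the decision-theoretic side. The value $V(m)$ is convex in the experiment, so $V(Z) \le \alpha V(W) + (1-\alpha)V(W') \le V(W)$. You then invoke the converse direction of Blackwell's theorem to turn this uniform value comparison into $W \succeq^B Z$.

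If $\succeq^B$ is defined by garbling, the paper's route is more elementary and constructive. It exhibits the garbling $H$ and never needs the hard (separating-hyperplane) direction of Blackwell's theorem. If $\succeq^B$ is defined by values in all decision problems, as in the paper's verbal description, your route is the self-contained one: it needs no appeal to Blackwell's theorem at all, whereas the paper's proof would need it to obtain $G$. Your route also makes the economic reason visible: a mixture of two experiments can never be worth more than the better endpoint. Both proofs use only $Z \succ^B W$ and $W \succeq^B W'$.

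One caution concerns your aside that a mixture of $W$ and $W'$ Blackwell-dominates $W'$. That claim is false in general. With binary states and signals, take $W$ to be the identity and $W'$ the noisy label swap with rows $(0.1,0.9)$ and $(0.9,0.1)$. Then $W \succ^B W'$. But $\tfrac12 W + \tfrac12 W'$ has rows $(0.55,0.45)$ and $(0.45,0.55)$. It is a garbling of $W'$, and $W'$ is not a garbling of it, so it is strictly less informative than $W'$. Convexity of $V$ only bounds the value of a mixture from above, never from below. You explicitly do not rely on this remark, so the proof stands, but you should delete it.
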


Thus, Blackwell dominance constrains agreement only negatively: the wider Blackwell pair \(Z,Z'\) cannot dominate the narrower one \(W,W'\) in the inclusion order. Note, however, that the narrower pair \(W,W'\) need not dominate the wider one \(Z,Z'\) either. Blackwell spread and agreement therefore remain distinct, with incomparability as the generic outcome when garbling changes the interpretive direction of the signal.

\begin{figure}[t]
\centering
\begin{tikzpicture}
\begin{axis}[
  width=9cm, height=9cm,
  axis equal image,
  xmin=0, xmax=1, ymin=0, ymax=1,
  xtick={0,0.2,...,1}, ytick={0,0.2,...,1},
  minor tick num=1,
  xlabel={$\theta_{11}$},
  ylabel={$\theta_{22}$},
  xlabel style={font=\small},
  ylabel style={font=\small},
  tick label style={font=\footnotesize},
  grid=both,
  grid style={line width=0.2pt, draw=gray!30},
  major grid style={line width=0.3pt, draw=gray!40},
  axis line style={line width=0.6pt},
  clip=false,
]

\addplot[dashed, gray!60, line width=0.7pt]
  coordinates {(0.70,0.60)(0.50,0.50)};
\node[gray!70, font=\small, anchor=north west]
  at (axis cs:0.555,0.527) {$\conv\{A,A'\}$};

\draw[->, thick, blue!70, line width=0.8pt, shorten >=3pt, shorten <=3pt]
  (axis cs:0.70,0.60) -- (axis cs:0.70,0.50);
\node[blue!70, font=\small, anchor=west]
  at (axis cs:0.712,0.550) {$G_1$};

\draw[->, thick, blue!70, line width=0.8pt, shorten >=3pt, shorten <=3pt]
  (axis cs:0.70,0.50) -- (axis cs:0.40,0.70);
\node[blue!70, font=\small, anchor=south east]
  at (axis cs:0.568,0.618) {$G_2$};

\draw[->, thick, blue!70, line width=0.8pt, shorten >=3pt, shorten <=3pt]
  (axis cs:0.40,0.70) -- (axis cs:0.50,0.50);
\node[blue!70, font=\small, anchor=east]
  at (axis cs:0.388,0.600) {$G_3$};

\addplot[navy,      mark=*, mark size=2pt, only marks] coordinates {(0.70,0.60)};
\addplot[navy!65,   mark=*, mark size=2pt, only marks] coordinates {(0.50,0.50)};
\addplot[red!65!black, mark=*, mark size=2pt, only marks] coordinates {(0.70,0.50)};
\addplot[red!45!black, mark=*, mark size=2pt, only marks] coordinates {(0.40,0.70)};

\node[font=\small\bfseries, navy, anchor=south]
  at (axis cs:0.700,0.600) {$A$};
\node[font=\small\bfseries, navy!65, anchor=north]
  at (axis cs:0.500,0.500) {$A'$};
\node[font=\small\bfseries, red!65!black, anchor=west]
  at (axis cs:0.700,0.500) {$E$};
\node[font=\small\bfseries, red!45!black, anchor=south]
  at (axis cs:0.400,0.700) {$E'$};

\end{axis}
\end{tikzpicture}
\caption{Blackwell and agreement use different geometries. The arrows show the garbling chain \(A\to E\to E'\to A'\), so \((A,A')\) is more spread out than \((E,E')\) in the Blackwell sense. The dashed segment is the fixed-interpretation slice \(\conv\{A,A'\}\). Here the Blackwell chain bends away from this segment: \(E\) and \(E'\) are
Blackwell-between the endpoints of \((A,A')\), but are not convex-between them.}
\label{fig:blackwell-square-example}
\end{figure}

\subsection{Relation to Divergences}\label{sec:divergences}

A large economics literature measures distances between probability distributions using \emph{divergences} --- most prominently the Kullback-Leibler divergence (KL) and its relatives (\cite{kullback1951information}). A useful unifying class is \emph{Bregman divergences}, which are generated by a convex potential and include KL as a special case (\cite{bregman1967relaxation}). This subsection explains how our cosine order fits into that tradition: once we impose the two invariances built into the contracting problem (scale invariance of beliefs and Euclidean relabelings of Arrow-Debreu directions), cosine similarity emerges as the unique Bregman-compatible notion of disagreement.

Let $\phi\colon \mathcal D\to\mathbb R$ be differentiable and strictly convex on an open convex domain $\mathcal D\subseteq\mathbb R^d$.
The associated Bregman divergence is
\[\label{eq:bregdef}\tag{\(1\)}
    D_\phi(u,v)\equiv\phi(u)-\phi(v)-\nabla\phi(v)\cdot (u-v),
\qquad u,v\in\mathcal D.\]
Many familiar divergences between distributions can be written in this form (e.g., KL is the Bregman divergence generated by negative entropy on the simplex). In our setting, however, joint acceptability depends on beliefs only up to positive scaling:
\[
C(\alpha p_m,\beta p_{m'})=C(p_m,p_{m'})
\qquad\text{for all }\alpha,\beta>0.
\]
Accordingly, any divergence-based comparison should be applied after \emph{discarding magnitude} and keeping only direction. This leads us to compare \emph{normalized} belief vectors on the unit sphere:
\[
\tilde p_m \equiv \frac{p_m}{||p_m||},
\qquad
\tilde p_{m'} \equiv \frac{p_{m'}}{||p_{m'}||},
\qquad \text{with }
\tilde p_m,\tilde p_{m'}\in\mathbb S^{d-1}.
\]
Given a potential $\phi$, this suggests a natural divergence preorder:
\[
(m,m')\succeq^\phi_p(\hat m,\hat m')
\quad\Longleftrightarrow\quad
D_\phi(\tilde p_m,\tilde p_{m'})
\leq
D_\phi(\tilde p_{\hat m},\tilde p_{\hat m'}).
\]

The Arrow-Debreu representation endows $\mathbb R^{\Omega\times S}\cong\mathbb R^d$ with a Euclidean structure, and our completion criterion treats orthogonal reparameterizations as irrelevant. This is exactly the content of the rotation invariance axiom in Definition~\ref{def:ric}. Our next observation explains why this axiom is demanding from the perspective of divergences: among Bregman divergences, rotation invariance essentially forces $D_\phi$ to be the Euclidean distance (up to rescaling).
\begin{proposition}[Rotation-invariant Bregman divergences are Euclidean]\label{prop:no-bregman-full}
Let \(d\ge 3\), and let \(\phi\colon \mathcal D\to\mathbb R\) be differentiable and strictly convex on an open set \(\mathcal D\subseteq\mathbb R^d\) containing \(\mathbb S^{d-1}\). Assume the
associated Bregman divergence satisfies orthogonal invariance on the sphere:
\[
D_\phi(Qu,Qv)=D_\phi(u,v),
\qquad
\text{for all }u,v\in\mathbb S^{d-1}\text{ and all }Q\in O(d).
\]
Then there exists \(c>0\) such that, for all \(u,v\in\mathbb S^{d-1}\), \(D_\phi(u,v)=c(1-u\cdot v)=\frac{c}{2}\norm{u-v}^2\). Consequently, for any nonzero \(a,b\in\mathbb R^d\),
\(D_\phi(\tilde a,\tilde b)=c(1-\cos(a,b))\).
\end{proposition}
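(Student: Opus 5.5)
The plan is to use orthogonal invariance to reduce $D_\phi$ on the sphere to a one-variable function of the inner product, and then use the Bregman structure to force that function to be affine.

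\emph{Step 1: reduction to the inner product.} For two pairs $(u,v)$ and $(u',v')$ of unit vectors with $u\cdot v=u'\cdot v'$, there exists $Q\in O(d)$ with $Qu=u'$ and $Qv=v'$. To build it, map an orthonormal basis adapted to $\operatorname{span}\{u,v\}$ to one adapted to $\operatorname{span}\{u',v'\}$; the Gram matrices agree, and one extends arbitrarily on the orthogonal complements. The invariance hypothesis then gives a function $g\colon[-1,1]\to\mathbb R$ with $D_\phi(u,v)=g(u\cdot v)$ for all $u,v\in\mathbb S^{d-1}$. Three properties of $g$ follow.
\begin{itemize}
\item Setting $u=v$ gives $g(1)=0$.
\item Strict convexity gives $D_\phi(u,v)>0$ for $u\neq v$, so $g(t)>0$ for $t<1$.
\item Continuity of $\phi$ and $\nabla\phi$ on the sphere is inherited only partially: I would use that $D_\phi$ is continuous in its first argument and deduce that $g$ is continuous on $[-1,1]$. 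Only continuity is needed, and that is what is used at the endpoints in Step 3.
\end{itemize}

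\emph{Step 2: exploit that $D_\phi(\cdot,v)$ is $\phi$ plus an affine function.} From \eqref{eq:bregdef}, for fixed $v$ the map $u\mapsto D_\phi(u,v)-\phi(u)$ is affine in $u$. Hence, for any two fixed $v,w\in\mathbb S^{d-1}$, the term $\phi(u)$ cancels in the difference, and
\[
F(u)\coloneqq g(u\cdot v)-g(u\cdot w)=\nabla\phi(w)\cdot(u-w)-\nabla\phi(v)\cdot(u-v)+\phi(w)-\phi(v).
\]
This is the restriction to $\mathbb S^{d-1}$ of an affine function $c+b\cdot u$. Now take $v,w$ orthonormal and decompose $b=\beta v+\gamma w+b_\perp$, with $b_\perp\perp\operatorname{span}\{v,w\}$. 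Write $u=sv+rw+z$, with $z\perp\operatorname{span}\{v,w\}$ and $\|z\|^2=1-s^2-r^2$.

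Because $d\ge3$, for any $(s,r)$ in the open unit disk the admissible vectors $z$ form a sphere of positive radius in a space of dimension at least one. The left side $g(s)-g(r)$ does not depend on $z$, while the right side contains the term $b_\perp\cdot z$, which does vary with $z$ unless $b_\perp=0$. So $b_\perp=0$, and
\[
g(s)-g(r)=c+\beta s+\gamma r\qquad\text{for all } s^2+r^2<1 .
\]

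\emph{Step 3: conclude.} Fix $r=0$. Then $g(s)=g(0)+c+\beta s$ for all $s\in(-1,1)$, so $g$ is affine on $(-1,1)$. By continuity it is affine on $[-1,1]$. Combining this with $g(1)=0$ gives $g(t)=c'(1-t)$ for some constant $c'$, and positivity of $g$ for $t<1$ gives $c'>0$. The identity $1-u\cdot v=\tfrac12\|u-v\|^2$ on the sphere yields the second expression. Applying the formula to $\tilde a,\tilde b$, where $\tilde a\cdot\tilde b=\cos(a,b)$, gives the consequence $D_\phi(\tilde a,\tilde b)=c(1-\cos(a,b))$.

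\emph{Main obstacle.} I expect the main obstacle to be Step 2, namely extracting linearity of $g$ from the Bregman form. The difficulty is that $\phi$ is unknown on the sphere, and the key is to cancel it by differencing across two base points. The hypothesis $d\ge3$ is needed exactly to make $u\cdot v$ and $u\cdot w$ vary independently while leaving a free orthogonal component $z$, which is what kills $b_\perp$.
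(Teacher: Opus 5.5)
Your proposal is correct and follows the paper's proof essentially step for step. Like the paper, you reduce $D_\phi$ on the sphere to $g(u\cdot v)$ via orthogonal invariance, difference two Bregman divergences at orthonormal base points to cancel $\phi(u)$, and use $d\ge 3$ to force $\nabla\phi(w)-\nabla\phi(v)\in\operatorname{span}\{v,w\}$. The only cosmetic difference is at the endpoints $t=\pm1$: you invoke continuity of $g$, which is valid since $\phi$ is differentiable and hence continuous, whereas the paper simply evaluates the affine identity at $u=\pm v$, where it already holds on the whole sphere once $b_\perp=0$.
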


The proof has two steps. Rotation invariance implies that, on the unit sphere, \(D_\phi(u,v)\) can depend only on the inner product \(u\cdot v\). The Bregman form then forces this dependence to be affine in \(1-u\cdot v\). Thus, the only rotation-invariant Bregman divergence on normalized beliefs is a positive multiple of squared Euclidean distance on the sphere.

The takeaway is that our joint decision problem points to a particular divergence structure. The relevant objects are directions of belief vectors in an Arrow-Debreu surplus space, and the relevant invariance is rotation of that Euclidean space. Once we impose normalization and rotation invariance, the Bregman family collapses to a single comparison. The cosine order is, therefore, not an \textit{ad hoc} similarity index: within the Bregman class, it is exactly the quadratic divergence between normalized belief vectors selected by the same symmetry requirements that underwrite our rotation-invariant completion.

In contrast to our measure, researchers widely use the KL divergence to measure belief distortions, misspecification, and robustness,\footnote{See, for example, work on convergence of beliefs under misspecification \citep{berk1966limiting,esponda_pouzo_2016_berknash,fudenberg2023pathwise}, belief distortions \citep{caplin2019wishful,chambers2023coherent}, and robust decision-making
\citep{hansen2008robustness}.} but it answers a different question from the one addressed by the cosine completion. To see the distinction, consider the symmetrized KL
divergence
\[
\mathrm{KL}_{\mathrm{sym}}(p_m,p_{m'})
=
\frac{1}{2}\left[
\mathrm{KL}(p_m||p_{m'})
+
\mathrm{KL}(p_{m'}||p_m)
\right].
\]
For the three ordered pairs from Example~\ref{ex:basic},
\[
  \mathrm{KL}_{\mathrm{sym}}(A,A') = 0.0525,
  \qquad
  \mathrm{KL}_{\mathrm{sym}}(B,B') = 0.0130,
  \qquad \text{and} \qquad
  \mathrm{KL}_{\mathrm{sym}}(C,C') = 0.
\]
Along this inclusion chain, KL and the inclusion order agree: more agreement corresponds
to smaller symmetrized KL. This is not a coincidence. If
\((m,m')\succeq^I(\hat m,\hat m')\), then \(p_m\) and \(p_{m'}\) lie on the line segment
between \(p_{\hat m}\) and \(p_{\hat m'}\). By joint convexity of KL, the symmetrized KL
between two points on this segment is no larger than the symmetrized KL between the
endpoints. In sum, along inclusion comparisons, KL cannot reverse the direction of the order.

Outside nested pairs, however, KL and cosine can diverge. Consider
\[
  F =
  \begin{pmatrix} 0.5 & 0.5 \\ 0.4 & 0.6 \end{pmatrix},
  \qquad \text{and} \qquad
  F' =
  \begin{pmatrix} 0.6 & 0.4 \\ 0.6 & 0.4 \end{pmatrix}.
\]
These satisfy
\[
  \mathrm{KL}_{\mathrm{sym}}(F,F')
  =
  0.0507
  <
  \mathrm{KL}_{\mathrm{sym}}(A,A')
  =
  0.0525,
\]
but
\[\cos(p_F,p_{F'}) = 0.9515 < \cos(p_A,p_{A'}) = 0.9535.\]
Thus \((F,F')\) is closer than \((A,A')\) in symmetrized KL, but farther apart in the cosine measure. The reversal reflects the different objects being measured. KL compares entry-by-entry log-likelihood ratios. Cosine compares the directions of the induced joint
vectors after discarding scale; and neither statistic dominates the other as a general measure of belief disagreement. The point of \Cref{prop:no-bregman-full} is that, for the decision-theoretic geometry of this paper, cosine is the Bregman comparison selected by
normalization and rotation invariance.

\subsection{\text{Ex-ante} vs. \text{Ex-interim} Individual Rationality}\label{sec:interim}

Our baseline analysis imposes \text{ex-ante} individual rationality: before observing the common signal, each agent must weakly prefer committing to the protocol over receiving the reservation payoff \(\bar u\). In some applications, however, agents may be able to \textit{walk away after observing the signal}. For instance, a committee member may veto a recommendation once the relevant evidence is on the table, or a party to an agreement may renegotiate after receiving a verifiable report. This motivates an \text{ex-interim} version of the contracting problem, in which participation is required \textit{signal by signal}. For the interim formulation, we allow the reservation benchmark to depend on the realized signal and write it as \((\bar u_s)_{s\in S}\).

\medskip
\noindent\textbf{\text{Ex-interim} participation.}
Fix a decision problem \((A,u)\) and signal-contingent interim reservation payoffs \((\bar u_s)_{s\in S}\). For a protocol \(\sigma\), define \(x=(x_s)_{s\in S}\), where \(x_s\in\mathbb R^\Omega\) is the state-contingent interim surplus after signal \(s\),
\[
x_s(\omega)\equiv U(\sigma(\cdot\mid s),\omega)-\bar u_s.
\]
For model \(m\), \text{ex-interim} participation at signal \(s\) requires
\[
p_m(\cdot\mid s)\cdot x_s\geq 0
\qquad\text{whenever }p_m(s)>0,
\]
with the convention that the constraint is vacuous if \(p_m(s)=0\).

For a perceived pair \((m,m')\), interim joint acceptability at signal \(s\) is the intersection of the relevant signal-level half-spaces. When both signal probabilities are
positive, this is
\[
C_s\left(p_m(\cdot\mid s),p_{m'}(\cdot\mid s)\right)
\equiv
\left\{
x_s\in\mathbb R^\Omega\colon 
p_m(\cdot\mid s)\cdot x_s\geq0,\ 
p_{m'}(\cdot\mid s)\cdot x_s\geq0
\right\}.
\]
Vis-a-vis \text{ex-ante} participation, \text{ex-interim} replaces the single joint-acceptability cone
\(C(p_m,p_{m'})\) with a collection of cones in \(\mathbb R^\Omega\),
\(\left(
C_s\left(p_m(\cdot\mid s),p_{m'}(\cdot\mid s)\right)
\right)_{s\in S}\),
one per signal.

\medskip
\noindent\textbf{Link to the \text{ex-ante} dot product.}
The interim viewpoint also clarifies what the analogous \text{ex-ante} dot product averages over. For a surplus vector \(x=(x_s)_{s\in S}\in\mathbb R^{\Omega\times S}\), where \(x_s(\omega)=U(\sigma(\cdot\mid s),\omega)-\bar u_s\), Bayes' rule gives
\[
p_m\cdot x
=
\sum_{s\in S}\sum_{\omega\in\Omega}p_m(\omega,s)x_s(\omega)
=
\sum_{s\in S}p_m(s)\left(p_m(\cdot\mid s)\cdot x_s\right).
\]
Thus, relative to the corresponding signal-contingent benchmark \((\bar u_s)_{s\in S}\), interim participation is stronger than requiring only the signal-probability-weighted average of interim surpluses to be nonnegative: a protocol can be acceptable on average even if it is unacceptable after some rare signal, as long as gains at other signals compensate under the agent's subjective weights.

\medskip
\noindent\textbf{\text{Ex-ante} and \text{ex-interim} agreement can diverge.}
Because Bayes' rule divides by \(p_m(s)\), conditioning can amplify disagreement concentrated on low-probability signals. Conversely, conditioning can erase disagreement that is purely about how frequently signals occur. The following two examples illustrate
the distinction.

First, \text{ex-ante} closeness need not imply interim closeness. Let
\(\Omega=\{0,1\}\), \(S=\{0,1\}\), and \(p(0)=p(1)=1/2\). For \(\varepsilon>0\), consider two
experiments under which signal \(1\) is rare for both agents but has opposite meanings:
\(m(1\mid 1)=\varepsilon\), \(m(1\mid 0)=\varepsilon^2\), \(m'(1\mid 1)=\varepsilon^2\), and \(m'(1\mid 0)=\varepsilon\). Then,
\(p_m(1)=p_{m'}(1)=\frac12(\varepsilon+\varepsilon^2)\), so the joint vectors \(p_m\) and \(p_{m'}\) become close in
\(\mathbb R^{\Omega\times S}\) as \(\varepsilon\to0\). Yet the posteriors at the rare signal
diverge:
\[
p_m(1\mid s=1)
=
\frac{\varepsilon}{\varepsilon+\varepsilon^2}
\to1,
\qquad \text{but} \qquad
p_{m'}(1\mid s=1)
=
\frac{\varepsilon^2}{\varepsilon+\varepsilon^2}
\to0.
\]
We see that the pair can appear close in the \text{ex-ante} geometry while inducing extreme interim disagreement after a rare signal.

Second, interim closeness need not imply \text{ex-ante} closeness. Fix the prior \(p\), and call an experiment \(m\) uninformative if
\(m(s\mid\omega)=\rho(s)\) for some \(\rho\in\Delta(S)\) independent of \(\omega\). Any two uninformative experiments induce identical posteriors signal by signal (the prior), so
their interim cones coincide. But if two such experiments have different signal distributions
\(\rho\) and \(\rho'\), then their joint vectors are
\(p_m(\omega,s)=p(\omega)\rho(s)\) and \(p_{m'}(\omega,s)=p(\omega)\rho'(s)\), so \textit{ex-ante} geometry \textit{does} distinguish them: if \(\rho\) and \(\rho'\) put mass on nearly disjoint subsets of \(S\), then \(p_m\) and \(p_{m'}\) can be far apart in \(\mathbb R^{\Omega\times S}\), despite perfect interim agreement.

The observations above are not pathologies; they reflect a basic feature of Bayesian updating. \text{Ex-ante} participation constraints ``average'' interim incentives using subjective signal probabilities, whereas interim participation conditions strip away those weights and look only at posteriors. Accordingly, \text{ex-ante} and \textit{ex-interim} joint acceptability constraints generate distinct geometric objects and different comparative statics: agreement in the \text{ex-ante} cone need not translate into agreement signal by signal, and vice versa.

This distinction also clarifies the dimensions along which agents may disagree. Following \citet{bohren2023behavioral}, disagreement can be prospective, retrospective, or both. Prospective disagreement concerns the probabilities of signals. Retrospective disagreement
concerns the meaning of realized signals, as captured by posterior beliefs. The \text{ex-ante} geometry is sensitive to both components because it depends on the full joint distributions \(p_m(\omega,s)\) and \(p_{m'}(\omega,s)\). The \text{ex-interim} geometry
isolates the retrospective component because it compares the posteriors \(p_m(\cdot\mid s)\) and \(p_{m'}(\cdot\mid s)\) signal by signal.\footnote{If we specify an objective signal-generating process, we can sharpen the taxonomy further by requiring subjective models to match the true unconditional distribution of signals, the
property \citet{bohren2023behavioral} call introspection-proofness. We leave this refinement to Online Appendix \ref{app:introspection}.}

The takeaway is that \text{ex-ante} and ex-interim participation capture distinct notions of agreement. The baseline \text{ex-ante} benchmark asks whether agents are willing to commit before the signal is observed, and, therefore, aggregates gains and losses relative to the scalar reservation payoff \(\bar u\) across signals using each agent's subjective signal probabilities. The ex-interim benchmark asks whether agents remain willing to participate after each signal realization relative to the signal-contingent reservation payoff \(\bar u_s\), and, therefore, focuses on posterior disagreement signal by signal. Neither notion subsumes the other as a measure of closeness between subjective models. Rather, the appropriate notion depends on the timing of commitment: \text{ex-ante} agreement is the relevant object when agents can commit before information arrives, while ex-interim agreement becomes relevant when agents can veto, renegotiate, or walk away after observing the signal.

\section{Conclusion}\label{sec:con}

This paper develops a decision-theoretic measure of disagreement between subjective signal structures. The primitive object is a pair of models for interpreting a common signal. Rather than measuring statistical distance directly, we ask which signal-contingent surplus plans both agents are willing to accept \text{ex-ante}, relative to the problem's reservation payoff. This yields a preorder over pairs of subjective models: one pair is more agreeable than another if it supports a larger cone of jointly acceptable surplus vectors.

Our main characterization reveals that this behavioral comparison has a simple representation. One pair of experiments is more agreeable than another if and only if each model in the more agreeable pair is a convex combination of the two models in the less agreeable pair. In short, a criterion defined through \text{ex-ante} participation constraints reduces to a prior-independent convexity test on signal structures themselves. The resulting order has direct economic content: greater agreement shrinks speculative-trade wedges, expands appropriately normalized \text{ex-ante} Pareto frontiers, and enlarges the set of single-model Bayesian narratives that can rationalize jointly acceptable behavior.

The inclusion preorder is incomplete, but its geometry also identifies a canonical scalar completion. Requiring a complete comparison to extend cone inclusion and to be invariant to rotations of the Arrow--Debreu surplus space uniquely selects cosine similarity between the induced joint state-signal distributions. Equivalently, within the class of rotation-invariant Bregman comparisons on normalized belief vectors, the relevant divergence is quadratic rather
than of KL-type. This distinction reflects the object being measured: not statistical proximity \textit{per se}, but similarity in the agents' evaluations of signal-contingent surplus directions.

Several extensions are natural. If agents disagree about priors, our fixed-reservation-payoff formulation remains well defined; one could also require participation relative to a set of possible reservation payoffs. If agents disagree about payoffs, the common surplus space itself changes, so the cone geometry would need to be modified. The framework also suggests a robust-choice interpretation, in which a single decision maker is uncertain which signal structure governs the data and accepts a plan only if it improves on the reservation payoff under all candidate experiments. Finally, extending the analysis beyond two agents would replace the intersection of two half-spaces with a higher-dimensional cone generated by many participation constraints.

Our broader message is that disagreement about interpretation can be compared through
its implications for common action. In environments such as medicine, financial analysis,
monetary policy, and algorithmic decision-making, agents often observe the same evidence but
disagree about what it means. The order developed here measures such disagreement by asking
which joint plans remain acceptable to all parties. In that sense, the distance between
interpretations is disciplined by the geometry of joint decision-making.

\newpage

\bibliography{sample.bib}
\bibliographystyle{plainnat}

\appendix

\newpage

    \section{Proofs}\label{sec:proofs}

\subsection{\texorpdfstring{Proof of Lemma \ref{lem:extreme}}{.}}
\begin{proof}
    If \(p_1=p_2\), then the two participation constraints coincide, so \(C(p_1,p_2) = \left\{x \colon p_1\cdot x\ge 0\right\}\), which is a half-space. If \(p_1\cdot p_2=0\), then the two normals are orthogonal. Since \(p_1,p_2\ge 0\), their dot product is always nonnegative, so the angle between them is always in \(\left[0,\pi/2\right]\). Hence, orthogonality is the largest possible angle. Finally, for nonnegative vectors, \(p_1\cdot p_2=0\) is equivalent to disjoint supports.
\end{proof}

\subsection{\texorpdfstring{Proof of \Cref{prop:inclusion_char}}{.}} 

\begin{proof}
    Note that, by definition of $\succeq^{I}_p$ and of dual cones,
\[(m,m')\succeq^{I}_p(\hat{m},\hat{m}')\Leftrightarrow C(p_{m},p_{m'})\supseteq C(p_{\hat{m}},p_{\hat{m}'})\Leftrightarrow (C(p_{\hat{m}},p_{\hat{m}'}))^*\supseteq (C(p_{m},p_{m'}))^*.\]
Recall that $(C(p,p'))^\ast=cone(p,p')\equiv\{x\in\mathbb{R}^{nk}\mid x=\alpha p+\beta p', \text{ for } \alpha,\beta\geq 0\}$. Then, 
\[(m,m')\succeq^{I}_p(\hat{m},\hat{m}')\Leftrightarrow \mathrm{cone}(p_{\hat{m}}, p_{\hat{m}'})\supseteq \mathrm{cone}(p_m, p_{m'})\Leftrightarrow \{p_m, p_{m'}\}\subseteq \mathrm{cone}(p_{\hat m}, p_{\hat m'}).\]
Equivalently, there exist $\alpha,\beta,\alpha',\beta'\geq 0$ such that  
\[\label{eq:a1}\tag{\(A1\)}p_{m}   = \alpha p_{\hat m} + \beta p_{\hat m'}, \qquad \text{and} \qquad
     p_{m'} = \alpha' p_{\hat m} + \beta' p_{\hat m'}.\]
Since $p_m, p_{m'},p_{\hat{m}}, p_{\hat{m}'}$ are probability distributions, we get $\beta=1-\alpha$ and $\beta'=1-\alpha'$.

Finally, \ref{prop1item2} \(\Rightarrow\) \ref{prop1item1} is immediate. If \ref{prop1item1} holds, then the fixed-prior characterization yields \eqref{eq:a1} for some full-support prior \(p\). Since \(p_m(\omega,s)=p(\omega)m(s\mid\omega)\), and similarly for \(m',\hat m,\hat m'\), dividing by \(p(\omega)>0\) delivers \ref{prop1item3}. Conversely, if \ref{prop1item3} holds, then multiplying the two assumed identities by any full-support prior \(p\) produces the fixed-prior characterization at that prior.
Thus, \((m,m')\succeq^I_p(\hat m,\hat m')\) for every full-support prior \(p\), which is \ref{prop1item2}.
\end{proof}

\subsection{\texorpdfstring{Proof of \Cref{prop:spec}}{.}}
\begin{proof}
    Assume first that \((m,m')\succeq^I_p(\hat m,\hat m')\). \Cref{prop:inclusion_char} tells us that there exist \(\alpha,\alpha' \in [0,1]\) such that
\[p_m = \alpha p_{\hat m}+\left(1-\alpha\right)p_{\hat m'},\qquad \text{and} \qquad
p_{m'} = \alpha' p_{\hat m}+\left(1-\alpha'\right)p_{\hat m'}.\]
Taking dot products with \(x\) yields
\[p_m\cdot x = \alpha\left(p_{\hat m}\cdot x\right)+\left(1-\alpha\right)\left(p_{\hat m'}\cdot x\right),\]
and
\[p_{m'}\cdot x = \alpha'\left(p_{\hat m}\cdot x\right)+\left(1-\alpha'\right)\left(p_{\hat m'}\cdot x\right).\]
We conclude that both \(p_m\cdot x\) and \(p_{m'}\cdot x\) lie in \(\conv\left\{p_{\hat m}\cdot x,p_{\hat m'}\cdot x\right\}\). Therefore,
\[\mathcal{T}_p\left(x;m,m'\right) \subseteq \mathcal{T}_p\left(x;\hat m,\hat m'\right).\]

Conversely, suppose that
\[\mathcal{T}_p\left(x;m,m'\right) \subseteq \mathcal{T}_p\left(x;\hat m,\hat m'\right) \quad \forall \ x \in \mathbb{R}^{\Omega\times S}.\]
As \(p_m\cdot x \in \mathcal{T}_p\left(x;m,m'\right)\), we have
\(p_m\cdot x \in \mathcal{T}_p\left(x;\hat m,\hat m'\right)\) for every \(x\). We claim that
\(p_m \in \conv\left\{p_{\hat m},p_{\hat m'}\right\}\). If not, then because \(\conv\left\{p_{\hat m},p_{\hat m'}\right\}\) is compact and convex, the strict separating hyperplane theorem implies that there exists \(x \in \mathbb{R}^{\Omega\times S}\) such that either
\[p_m\cdot x>\sup_{y \in \conv\left\{p_{\hat m},p_{\hat m'}\right\}}y\cdot x \qquad \text{or} \qquad p_m\cdot x<\inf_{y \in\conv\left\{p_{\hat m},p_{\hat m'}\right\}}y\cdot x.\]
Since a linear functional attains its maximum and minimum over a line segment at an endpoint, this implies either
\[p_m\cdot x>\max\left\{p_{\hat m}\cdot x,p_{\hat m'}\cdot x\right\} \qquad \text{or} \qquad
p_m\cdot x<\min\left\{p_{\hat m}\cdot x,p_{\hat m'}\cdot x\right\},\]
contradicting \(p_m \cdot x \in \mathcal{T}_p\left(x;\hat m,\hat m'\right)\). Therefore, \(p_m \in \conv\left\{p_{\hat m},p_{\hat m'}\right\}\) and the same argument delivers \(p_{m'} \in\conv\left\{p_{\hat m},p_{\hat m'}\right\}\).

Hence, there exist \(\alpha,\alpha' \in[0,1]\) such that
\[p_m = \alpha p_{\hat m}+\left(1 - \alpha\right)p_{\hat m'},\qquad \text{and} \qquad
p_{m'} = \alpha' p_{\hat m}+\left(1-\alpha'\right)p_{\hat m'}.\]
By \Cref{prop:inclusion_char}, \((m,m')\succeq^I_p(\hat m,\hat m')\).\end{proof}

\subsection{\texorpdfstring{Proof of \Cref{prop:pareto_frontier}}{.}}
\begin{proof}
    By \Cref{prop:inclusion_char}, the inclusion
\(C(p_{\hat m},p_{\hat m'})\subseteq C(p_m,p_{m'})\) implies that there are
\(\alpha,\alpha'\in[0,1]\) such that
\[
p_m=\alpha p_{\hat m}+(1-\alpha)p_{\hat m'},
\qquad \text{and} \qquad
p_{m'}=\alpha' p_{\hat m}+(1-\alpha')p_{\hat m'}.
\]
Let
\[
M=\begin{pmatrix}
\alpha & 1-\alpha\\
\alpha' & 1-\alpha'
\end{pmatrix}.
\]
Then \(M\) is row-stochastic and
\(M(p_{\hat m},p_{\hat m'})^\top=(p_m,p_{m'})^\top\).

Fix \(y\in J(\hat m,\hat m';A,u)\). By definition, there is some
\(x\in K(\hat m,\hat m';A,u)\) such that
\(y=(p_{\hat m}\cdot x,p_{\hat m'}\cdot x)\). Set \(y'=My\). The definition of
\(M\) yields \(y'=(p_m\cdot x,p_{m'}\cdot x)\). Moreover,
\[
K(\hat m,\hat m';A,u)
=
X(A,u)\cap C(p_{\hat m},p_{\hat m'})
\subseteq
X(A,u)\cap C(p_m,p_{m'})
=
K(m,m';A,u),
\]
so the same surplus vector \(x\) is feasible for \((m,m')\). Hence
\(y'\in J(m,m';A,u)\).

It remains only to move from this feasible payoff \(y'\) to a Pareto-frontier point that dominates it. Define
\[
J^+(y')\equiv J(m,m';A,u)\cap(y'+\mathbb R^2_+).
\]
This set is nonempty, since it contains \(y'\), and compact, since
\(J(m,m';A,u)\) is compact. Choose any \(\lambda\gg0\), and let
\(z\in\arg\max_{\tilde z\in J^+(y')}\lambda\cdot\tilde z\). By construction, \(z\ge y'\). We claim that \(z\in\mathcal F(m,m';A,u)\). If not, some \(\bar z\in J(m,m';A,u)\) satisfies \(\bar z\ge z\) and \(\bar z\neq z\). Since \(z\ge y'\), this \(\bar z\) also belongs to \(J^+(y')\). But \(\lambda\gg0\) then implies \(\lambda\cdot\bar z>\lambda\cdot z\), contradicting the choice of \(z\). Thus, for every \(y\in J(\hat m,\hat m';A,u)\), there exists \(z\in\mathcal F(m,m';A,u)\) such that \(z\ge My\). Equivalently, \(\mathcal F(m,m';A,u)\succeq_{WSO}M[J(\hat m,\hat m';A,u)]\).\end{proof}

\subsection{\texorpdfstring{Proof of \Cref{prop:incrat_fixedp}}{.}}
\begin{proof}
Take a full-support prior \(p\) and suppose
\((m,m')\succeq^{I}_p(\hat m,\hat m')\), i.e., \(C(p_{\hat m},p_{\hat m'})\subseteq C(p_m,p_{m'})\). Let \(m^r\in\mathcal{R}_p(\hat m,\hat m')\). By definition, for every decision problem \((A,u)\) and reservation payoff \(\bar u\) whose reservation payoff has a feasible representation, there exists a Bayes-optimal protocol \(\sigma^\ast_{(A,u)}\) under \(m^r\) whose induced surplus vector satisfies \(x_{\sigma^\ast}\in C(p_{\hat m},p_{\hat m'})\). By cone inclusion, \(x_{\sigma^\ast}\in C(p_m,p_{m'})\). The protocol remains Bayes-optimal under \(m^r\), since the rationalizing model has not
changed. Hence, \(m^r\in\mathcal{R}_p(m,m')\); and since \(m^r\) was arbitrary, \(\mathcal{R}_p(m,m')\supseteq\mathcal{R}_p(\hat m,\hat m')\).

For the prior-free statement, suppose \((m,m')\succeq^{I}(\hat m,\hat m')\) and take \(m^r\in\mathcal{R}(\hat m,\hat m')\). Then for every full-support prior \(p\), \(m^r\in\mathcal{R}_p(\hat m,\hat m')\), and uniform inclusion implies \(\mathcal{R}_p(m,m')\supseteq\mathcal{R}_p(\hat m,\hat m')\).
Consequently, \(m^r\in\mathcal{R}_p(m,m')\) for every full-support prior \(p\), so \(m^r\in\mathcal{R}(m,m')\). Therefore,
\(\mathcal{R}(m,m')\supseteq\mathcal{R}(\hat m,\hat m')\).\end{proof}

\subsection{\texorpdfstring{Proof of \Cref{prop:cosine}}{.}}
\begin{proof}
If \(d=1\), every element of \(\mathcal{P}_p\) has the same direction, so every cone is the same half-space and the result is immediate. Assume \(d\ge2\). For \(a,b\in \mathcal{P}_p\), recall \(\theta(a,b)\coloneqq\arccos\left(\cos(a,b)\right)\). Since \(a,b\neq0\) and \(a,b\ge0\), \(\theta(a,b)\in\left[0,\pi/2\right]\).

We first record the elementary geometric fact used below. For all \((a,b),(\hat a,\hat b)\in \mathcal{P}_p^2\),
\[\label{fact}\tag{FACT}
\exists Q\in O(d)\colon Q\left[C(\hat a,\hat b)\right]\subseteq C(a,b) \qquad \Longleftrightarrow \qquad \theta(a,b)\le\theta(\hat a,\hat b),
\]
with strict inclusion iff \(\theta(a,b)<\theta(\hat a,\hat b)\), and equality up to rotation iff \(\theta(a,b)=\theta(\hat a,\hat b)\). Indeed, \(C(a,b)^*=\operatorname{cone}(a,b)\). Hence, for any \(Q\in O(d)\), \(Q\left[C(\hat a,\hat b)\right]\subseteq C(a,b)\) iff \(\operatorname{cone}(a,b)\subseteq Q\operatorname{cone}(\hat a,\hat b)\). These normal cones are one- or two-ray closed wedges, and the aperture of \(\operatorname{cone}(a,b)\) is \(\theta(a,b)\). Thus, a rotated copy of \(\operatorname{cone}(\hat a,\hat b)\) contains \(\operatorname{cone}(a,b)\) iff its aperture is weakly larger. Strictness and equality follow because rotations preserve aperture and two closed wedges with the same aperture can be nested only if they coincide.

Define \(\succeq_p^{RIC}\) on \(\mathcal{P}_p^2\) by \((a,b)\succeq_p^{RIC}(\hat a,\hat b)\) iff \(\cos(a,b)\ge\cos(\hat a,\hat b)\). This relation is complete. If \(C(\hat a,\hat b)\subseteq C(a,b)\), \eqref{fact} with \(Q=I\) gives \(\theta(a,b)\le\theta(\hat a,\hat b)\), hence, \(\cos(a,b)\ge\cos(\hat a,\hat b)\). Thus, \(\succeq_p^{RIC}\) extends the inclusion preorder. If \(C(\hat a,\hat b)=Q\left[C(a,b)\right]\) for some \(Q\in O(d)\), \eqref{fact} in both directions gives \(\theta(a,b)=\theta(\hat a,\hat b)\), hence, \((a,b)\sim_p^{RIC}(\hat a,\hat b)\). If \(Q\left[C(\hat a,\hat b)\right]\subsetneq C(a,b)\), \eqref{fact} produces \(\theta(a,b)<\theta(\hat a,\hat b)\), hence, \(\cos(a,b)>\cos(\hat a,\hat b)\), so \((a,b)\succ_p^{RIC}(\hat a,\hat b)\). Therefore, \(\succeq_p^{RIC}\) is a rotation-invariant strict completion of the inclusion preorder.

It remains to prove uniqueness. Let \(\succeq\) be any rotation-invariant strict completion of the inclusion preorder on \(\mathcal{P}_p^2\). Fix \((a,b),(\hat a,\hat b)\in \mathcal{P}_p^2\). If \(\cos(a,b)=\cos(\hat a,\hat b)\), then \(\theta(a,b)=\theta(\hat a,\hat b)\), so \eqref{fact} yields some \(Q\in O(d)\) such that \(Q\left[C(\hat a,\hat b)\right]=C(a,b)\). Equivalently, \(C(\hat a,\hat b)=Q^{-1}\left[C(a,b)\right]\), and rotation invariance implies \((a,b)\sim(\hat a,\hat b)\). If \(\cos(a,b)>\cos(\hat a,\hat b)\), then \(\theta(a,b)<\theta(\hat a,\hat b)\), so \eqref{fact} yields some \(Q\in O(d)\) such that \(Q\left[C(\hat a,\hat b)\right]\subsetneq C(a,b)\). Strictness implies \((a,b)\succ(\hat a,\hat b)\). The case \(\cos(a,b)<\cos(\hat a,\hat b)\) is symmetric, so \((a,b)\prec(\hat a,\hat b)\).

Thus, for all \((a,b),(\hat a,\hat b)\in \mathcal{P}_p^2\),
\((a,b)\succeq(\hat a,\hat b)\) iff \(\cos(a,b)\ge\cos(\hat a,\hat b)\), and so every rotation-invariant strict completion coincides with \(\succeq_p^{RIC}\). Taking \(a=p_m\), \(b=p_{m'}\), \(\hat a=p_{\hat m}\), and \(\hat b=p_{\hat m'}\) yields the proposition.
\end{proof}

\subsection{\texorpdfstring{Proof of \Cref{prop:urs_cosine}}{.}} 
\begin{proof}
The case \(d=1\) is trivial, so assume \(d\ge2\). For nonzero \(a,b\in P_p\), recall \(\theta(a,b)\coloneqq\arccos\left(\cos(a,b)\right)\). Since \(a,b\ge0\), \(\theta(a,b)\in\left[0,\pi/2\right]\).

We first compute \(\mu\left(C(a,b)\cap\mathbb S^{d-1}\right)\). By rotation invariance of \(\mu\), rotate coordinates so that \(a\) is a positive multiple of \(e_1\) and \(b\) is a positive multiple of \(\cos\left(\theta(a,b)\right)e_1+\sin\left(\theta(a,b)\right)e_2\). Then \(z\in C(a,b)\cap\mathbb S^{d-1}\) iff
\[
z_1\ge0\quad\text{and}\quad\cos\left(\theta(a,b)\right)z_1+\sin\left(\theta(a,b)\right)z_2\ge0.
\]
These inequalities depend only on the polar angle \(\varphi\) of the projection of \(z\) on \(\operatorname{span}\left\{e_1,e_2\right\}\). By invariance under rotations in this plane, this angle is uniform on \(\left[0,2\pi\right)\), except on the \(\mu\)-null set where the projection is zero. For \(y=\left(\cos\varphi,\sin\varphi\right)\), the two inequalities become \(\cos\varphi\ge0\) and \(\cos\left(\varphi-\theta(a,b)\right)\ge0\), whose intersection has angular length \(\pi-\theta(a,b)\). Hence,
\[
\mu\left(C(a,b)\cap\mathbb S^{d-1}\right)=\frac{\pi-\theta(a,b)}{2\pi}.
\]
Applying this identity to \((a,b)=\left(p_m,p_{m'}\right)\) and \((a,b)=\left(p_{\hat m},p_{\hat m'}\right)\), we obtain
\[
\pi_p(m,m')\ge\pi_p(\hat m,\hat m')\Longleftrightarrow\theta\left(p_m,p_{m'}\right)\le\theta\left(p_{\hat m},p_{\hat m'}\right)\Longleftrightarrow\cos\left(p_m,p_{m'}\right)\ge\cos\left(p_{\hat m},p_{\hat m'}\right),
\]
where the last equivalence uses that cosine is decreasing on \(\left[0,\pi\right]\). This proves the result.
\end{proof}

\subsection{\texorpdfstring{Proof of \Cref{prop:cosinprior}}{.}} 
\begin{proof}
We first show that \(\succeq^{CS}\) is incomplete. Let \(\Omega=\left\{\omega_1,\omega_2\right\}\) and \(S=\left\{s_1,s_2\right\}\), and define
\[
m=
\begin{pmatrix}
9/10&1/10\\
1/2&1/2
\end{pmatrix},
\qquad
m'=
\begin{pmatrix}
1/10&9/10\\
1/2&1/2
\end{pmatrix},
\]
\[
\hat m=
\begin{pmatrix}
1/2&1/2\\
9/10&1/10
\end{pmatrix},
\qquad \text{and} \qquad
\hat m'=
\begin{pmatrix}
1/2&1/2\\
1/10&9/10
\end{pmatrix}.
\]
For \(t\in(0,1)\), let \(p_t(\omega_1)=t\) and \(p_t(\omega_2)=1-t\), and write \(p_{t,x}\) for the joint distribution induced by experiment \(x\) under prior \(p_t\). A direct computation produces
\[
\cos\left(p_{t,m},p_{t,m'}\right)=\frac{9t^2+25(1-t)^2}{41t^2+25(1-t)^2},
\qquad \text{and} \qquad
\cos\left(p_{t,\hat m},p_{t,\hat m'}\right)=\frac{25t^2+9(1-t)^2}{25t^2+41(1-t)^2}.
\]
At \(t=1/4\), the first expression equals \(117/133\) and the second equals \(53/197\). At \(t=3/4\), the inequalities reverse. Hence neither pair dominates the other for every full-support prior, so \(\succeq^{CS}\) is incomplete.

Next we show that \(\succeq^{CS}\ne\succeq^I\). Let
\[
r=
\begin{pmatrix}
9/10&1/10\\
1/10&9/10
\end{pmatrix},
\quad
r'=
\begin{pmatrix}
1/10&9/10\\
9/10&1/10
\end{pmatrix},
\qquad \text{and} \qquad
\bar r=
\begin{pmatrix}
7/10&3/10\\
7/10&3/10
\end{pmatrix}.
\]
For every full-support prior \(p\),
\(\cos\left(p_{\bar r},p_{\bar r}\right)=1>\frac{9}{41}=\cos\left(p_r,p_{r'}\right)\), so \((\bar r,\bar r)\succeq^{CS}(r,r')\). But \((\bar r,\bar r)\not\succeq^I(r,r')\). Indeed, if \((\bar r,\bar r)\succeq^I(r,r')\), then by the convex-hull characterization of \(\succeq^I\), there would exist \(\alpha\in[0,1]\) such that
\(\bar r=\alpha r+(1-\alpha)r'\). The \((\omega_1,s_1)\) entry produces \(\alpha=3/4\), while the \((\omega_2,s_1)\) entry produces \(\alpha=1/4\), a contradiction. Therefore, \((\bar r,\bar r)\succeq^{CS}(r,r')\) but \((\bar r,\bar r)\not\succeq^I(r,r')\), so \(\succeq^{CS}\ne\succeq^I\).
\end{proof}

\subsection{\texorpdfstring{Proof of \Cref{prop:blackwell_inclusion_constraint}}{.}}
\begin{proof}
Suppose, toward a contradiction, that \((Z, Z') \succeq^I (W, W')\), i.e., \(Z = \alpha W + (1-\alpha) W'\) for some \(\alpha \in [0,1]\). As \(W \succ^B W'\), there exists a garbling \(G\) such that \(W' = WG\). Thus,
\[Z = \alpha W + (1-\alpha)WG = W\left[\alpha I + (1-\alpha)G\right].\]
Now let \(H \coloneqq \alpha I + (1-\alpha)G\). As both \(I\) and \(G\) are stochastic matrices, so too is \(H\). Thus, \(Z = WH\) and so \(Z\) is a garbling of \(W\). Therefore, \(W \succeq^B Z\), contradicting \(Z \succ^B W\).\end{proof}

\subsection{\texorpdfstring{Proof of \Cref{prop:no-bregman-full}}{.}}
\begin{proof}
For \(u,v\in\mathbb S^{d-1}\), orthogonal invariance implies that \(D_\phi(u,v)\) depends only on \(u\cdot v\). Indeed, if \(u\cdot v=\tilde u\cdot\tilde v\), there is \(Q\in O(d)\) such that \(Qu=\tilde u\) and \(Qv=\tilde v\). Hence, there exists \(f\colon[-1,1]\to\mathbb R\) such that
\(D_\phi(u,v)=f(u\cdot v)\) for all \(u,v\in\mathbb S^{d-1}\).

Fix orthogonal \(y,z\in\mathbb S^{d-1}\). For every \(x\in\mathbb S^{d-1}\), from the Bregman form,
\[\tag{\(A2\)}\label{eq:a2}
f(x\cdot y)-f(x\cdot z)=D_\phi(x,y)-D_\phi(x,z)=\left(\nabla\phi(z)-\nabla\phi(y)\right)\cdot x+B,
\]
where
\[
B\coloneqq\phi(z)-\phi(y)+\nabla\phi(y)\cdot y-\nabla\phi(z)\cdot z.
\]
Let \(A\coloneqq\nabla\phi(z)-\nabla\phi(y)\). Since \(d\ge3\), for any \(\alpha,\beta\in\mathbb R\) with \(\alpha^2+\beta^2<1\), we may choose \(w\in\operatorname{span}\left\{y,z\right\}^{\perp}\cap\mathbb S^{d-1}\) and define
\(x_{\pm}\coloneqq\alpha y+\beta z\pm\sqrt{1-\alpha^2-\beta^2}w\). The vectors \(x_+\) and \(x_-\) have the same inner products with \(y\) and \(z\). Applying \eqref{eq:a2} to \(x_+\) and \(x_-\), therefore, yields \(A\cdot w=0\). Varying \(w\), we obtain \(A\in\operatorname{span}\left\{y,z\right\}\). Thus, there are \(a,b\in\mathbb R\) such that
\[
f(\alpha)-f(\beta)=a\alpha+b\beta+B \qquad \text{whenever } \alpha^2+\beta^2<1.\]

Taking \(\alpha=\beta=t\) for \(|t|<1/\sqrt2\) gives \(0=(a+b)t+B\), so \(b=-a\) and \(B=0\). Hence,
\[\tag{\(A3\)}\label{eq:a3}f(\alpha)-f(\beta)=a(\alpha-\beta) \qquad \text{whenever } \alpha^2+\beta^2<1.
\]
Taking \(\beta=0\) yields \(f(t)=f(0)+at\) for all \(t\in(-1,1)\). Taking \(x=y\) and \(x=-y\) in \eqref{eq:a3} yields the same formula at \(t=1\) and \(t=-1\). Since \(f(1)=D_\phi(y,y)=0\), we have \(a=-f(0)\). Therefore, setting \(c\coloneqq f(0)\),
\(f(t)=c(1-t)\) for all \(t\in[-1,1]\). Strict convexity implies \(c=f(0)=D_\phi(y,z)>0\), since \(y\neq z\).

Consequently, for all \(u,v\in\mathbb S^{d-1}\), \(D_\phi(u,v)=c(1-u\cdot v)=\frac{c}{2}\norm{u-v}^2\). Finally, for nonzero \(a,b\in\mathbb R^d\), we get
\(D_\phi(\tilde a,\tilde b)=c\left(1-\cos(a,b)\right)\).\end{proof}

\newpage

\section{Online Appendix}

\subsection{Objective Signal Frequencies and Introspection-Proofness}
\label{app:introspection}

Section~\ref{sec:interim} compares \textit{ex-ante} and \textit{ex-interim} participation without specifying an objective signal-generating process. We can sharpen this distinction when there is an objective experiment. Suppose the common prior \(p\) is objective and that signals are generated by a true Blackwell experiment \(m^\ast\colon \Omega\to\Delta(S)\). The true
unconditional distribution of signals is then
\(p_{m^\ast}(s)
=
\sum_{\omega\in\Omega}p(\omega)m^\ast(s\mid\omega)\).

Following \citet{bohren2023behavioral}, a subjective experiment is
\textit{introspection-proof} if it matches the true unconditional distribution of signal
realizations. Formally, define
\[
B(p,m^\ast)
\equiv
\left\{
m\colon \Omega\to\Delta(S) \colon \
\sum_{\omega\in\Omega}p(\omega)m(s\mid\omega)
=
\sum_{\omega\in\Omega}p(\omega)m^\ast(s\mid\omega)
\text{ for every }s\in S
\right\}.
\]
Thus, \(m\in B(p,m^\ast)\) if the agent's subjective model predicts the correct frequency
of each signal, even though it may assign a different meaning to those signals.

This restriction rules out purely prospective disagreement relative to the objective signal
process. If \(m,m'\in B(p,m^\ast)\), then
\[
p_m(s)=p_{m'}(s)=p_{m^\ast}(s)
\qquad\text{for every }s\in S.
\]
As a result, the two agents agree about how often each signal occurs. Any remaining disagreement must concern the posterior interpretation of realized signals: for some signal \(s\), the
agents may have
\(p_m(\cdot\mid s)\neq p_{m'}(\cdot\mid s)\). In the terminology of \citet{bohren2023behavioral}, restricting attention to
\(B(p,m^\ast)\) removes disagreement about signal frequencies and leaves only disagreement
about what signals mean.

The \textit{ex-ante}/\textit{ex-interim} distinction remains meaningful under this restriction.
For any \(m\in B(p,m^\ast)\) and interim surplus vector \(x=(x_s)_{s\in S}\), where \(x_s(\omega)\) is measured relative to the signal-contingent reservation payoff \(\bar u_s\), the \textit{ex-ante} evaluation can be written as
\[
p_m\cdot x
=
\sum_{s\in S}p_m(s)\left(p_m(\cdot\mid s)\cdot x_s\right)
=
\sum_{s\in S}p_{m^\ast}(s)\left(p_m(\cdot\mid s)\cdot x_s\right),
\]
and so within \(B(p,m^\ast)\), \textit{ex-ante} participation aggregates interim gains using
the objective signal frequencies. Model disagreement enters only through the posterior
terms \(p_m(\cdot\mid s)\). In contrast, \textit{ex-interim} participation continues to impose the signal-by-signal constraints \(p_m(\cdot\mid s)\cdot x_s\geq 0\), with \(x_s\) measured relative to \(\bar u_s\). Therefore, even when subjective models are
introspection-proof, \textit{ex-ante} and \textit{ex-interim} participation remain distinct:
the former averages posterior disagreements across signals using the common signal
frequencies, while the latter requires participation after each realized signal separately.

The set \(B(p,m^\ast)\) is, therefore, useful when one wants to focus on interpretive
disagreement while ruling out disagreement about the frequency of evidence. The main text
does not impose this restriction because our baseline comparison treats subjective signal
structures as primitives and does not require an objective experiment. When an objective
experiment is available, however, \(B(p,m^\ast)\) provides a natural refinement of the
\textit{ex-ante}/\textit{ex-interim} taxonomy.

\subsection{What Changes Under a Prior-Optimal Reservation Action}
\label{app:payoff_vectors_decision_problems}

In the main text, we use a fixed reservation payoff \(\bar u\). Under this specification, finite decision problems span the whole Arrow-Debreu surplus space \(\mathbb R^{\Omega\times S}\). This appendix records what changes under the alternative requirement that \(\bar u\) be generated by an action that is prior-optimal under \(p\).

Consider the alternative setup in which the reservation payoff \(\bar u\) must be generated by an action \(a^0\) that is prior-optimal under \(p\), with \(\sum_{\omega\in\Omega}p(\omega)u(a^0,\omega)=\bar u\). Let \(x\) be the surplus vector induced by a decision problem and protocol under this alternative. For every signal \(s\in S\), \(\sum_{\omega\in\Omega}p(\omega)x(\omega,s)\leq0\). Indeed, after signal \(s\), the protocol chooses a feasible mixed action, while the reservation action \(a^0\) is prior-optimal under \(p\). Hence the action chosen after \(s\) cannot generate strictly positive expected surplus under the prior relative to \(\bar u\). This observation identifies the set of payoff vectors implementable via a decision problem with a prior-optimal reservation action. Define
\[
K_p\coloneqq\left\{x\in\mathbb R^{\Omega\times S}\colon\sum_{\omega\in\Omega}p(\omega)x(\omega,s)\leq0\text{ for every }s\in S\right\}.
\]

\begin{lemma}
\label{lem:decision_problem_cone}
A vector \(x\in\mathbb R^{\Omega\times S}\) is generated by some finite decision problem and protocol, with the reservation payoff generated by a prior-optimal action under \(p\), if and only if \(x\in K_p\).
\end{lemma}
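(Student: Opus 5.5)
The proof has two directions; the forward one is already sketched in the text preceding the lemma, and the substantive work is the converse construction.

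\emph{Necessity.} Fix a finite decision problem $(A,u)$, a prior-optimal action $a^0$ with $\sum_\omega p(\omega)u(a^0,\omega)=\bar u$, and a protocol $\sigma$. For each signal $s$, the mixed action $\sigma(\cdot\mid s)$ is feasible. Prior-optimality of $a^0$ among mixed actions (by linearity it suffices to compare against pure actions) gives
\[
\sum_\omega p(\omega)U(\sigma(\cdot\mid s),\omega)\le \bar u .
\]
Hence $\sum_\omega p(\omega)x(\omega,s)\le 0$ for every $s$, that is, $x\in K_p$.

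\emph{Sufficiency.} I will construct the decision problem explicitly, mirroring the footnote construction of Section~\ref{sec:geometry}. Take $x\in K_p$ and set $\bar u=0$ (any value works by translation). Let $A=\{a^0\}\cup\{a^s:s\in S\}$, with
\[
u(a^0,\omega)=0 \quad\text{and}\quad u(a^s,\omega)=x(\omega,s)\quad\text{for all }\omega .
\]
Then:
\begin{itemize}
\item $a^0$ generates $\bar u$, since $\sum_\omega p(\omega)u(a^0,\omega)=0$.
\item $a^0$ is prior-optimal. Each alternative satisfies $\sum_\omega p(\omega)u(a^s,\omega)=\sum_\omega p(\omega)x(\omega,s)\le 0$ precisely because $x\in K_p$, and mixed actions are averages of these values.
\item The deterministic protocol $\sigma(s)=a^s$ induces surplus $U(\sigma(\cdot\mid s),\omega)-\bar u=x(\omega,s)$, which is exactly $x$.
\end{itemize}
Finiteness of $A$ is immediate because $|A|=k+1$.

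The only point needing care is that prior-optimality must be checked against mixed actions as well as pure ones. This follows from linearity of expected utility, so I expect no real obstacle.
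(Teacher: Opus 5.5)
Your proof is correct and follows the paper's argument essentially verbatim. Necessity comes from prior-optimality of $a^0$ against the feasible mixed action $\sigma(\cdot\mid s)$, signal by signal; sufficiency comes from the decision problem with a zero-payoff reservation action $a^0$ and one action $a^s$ per signal with $u(a^s,\omega)=x(\omega,s)$, using the deterministic protocol $s\mapsto a^s$ (your explicit linearity remark about mixed actions is a small point the paper leaves implicit).
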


\begin{proof}
Necessity follows from the preceding paragraph. For sufficiency, fix \(x\in K_p\). Normalize the reservation payoff to \(\bar u=0\); adding a constant \(\bar u\) to all action payoffs recovers the general case. Construct a decision problem with one reservation action \(a^0\), whose payoff is zero in every state, and one action \(a^s\) for each signal \(s\in S\), with payoff
\[
u(a^s,\omega)=x(\omega,s).
\]
Let the protocol choose \(a^s\) after signal \(s\). Since \(x\in K_p\),
\[
\sum_{\omega\in\Omega}p(\omega)u(a^s,\omega)
=
\sum_{\omega\in\Omega}p(\omega)x(\omega,s)
\le 0
=
\sum_{\omega\in\Omega}p(\omega)u(a^0,\omega)
\]
for every \(s\). Thus \(a^0\) is prior-optimal and represents the reservation payoff \(\bar u=0\). The protocol induces exactly the surplus vector \(x\).
\end{proof}

The lemma shows that the prior-optimal-reservation-action requirement changes the
behavioral domain. Under the fixed-reservation-payoff formulation, finite decision
problems can implement every surplus vector in the Arrow--Debreu space, so the
full-space cone inclusion in \Cref{prop:inclusion_char} is both necessary and sufficient
for the uniform behavioral implication. Once the reservation payoff must instead be
generated by an action that is prior-optimal under \(p\), this spanning argument breaks
down: decision problems implement only the subcone \(K_p\). Hence full-space inclusion
still implies the desired behavioral comparison, because
\[
C(p_{\hat m},p_{\hat m'})\subseteq C(p_m,p_{m'})
\quad\Longrightarrow\quad
C(p_{\hat m},p_{\hat m'})\cap K_p
\subseteq
C(p_m,p_{m'})\cap K_p.
\]
But the converse need not hold. The relevant behavioral implication is now tested only
on surplus vectors in \(K_p\), and two pairs of models may have the required inclusion
after intersecting with \(K_p\) even though their full Arrow--Debreu cones are not
nested. Thus, under the prior-optimal reservation-action restriction, the inclusion order from the main text remains a sufficient condition, but it ceases to be necessary. The exact comparison is instead the restricted cone inclusion
\[
C(p_{\hat m},p_{\hat m'})\cap K_p
\subseteq
C(p_m,p_{m'})\cap K_p,
\]
rather than the full-space inclusion
\(C(p_{\hat m},p_{\hat m'})\subseteq C(p_m,p_{m'})\).

\end{document}